\theoremstyle{thmstyleone}%
\newtheorem{theorem}{Theorem}%  meant for continuous numbers
\newtheorem{lemma}{Lemma}
\theoremstyle{thmstyletwo}%
\newtheorem{remark}{Remark}%
\theoremstyle{thmstylethree}%
\newtheorem{definition}{Definition}%
\begin{document}

\title[Dollar Cost Averaging Returns Estimation]{Dollar Cost Averaging Returns Estimation}

%%=============================================================%%
%% Prefix	-> \pfx{Dr}
%% GivenName	-> \fnm{Joergen W.}
%% Particle	-> \spfx{van der} -> surname prefix
%% FamilyName	-> \sur{Ploeg}
%% Suffix	-> \sfx{IV}
%% NatureName	-> \tanm{Poet Laureate} -> Title after name
%% Degrees	-> \dgr{MSc, PhD}
%% \author*[1,2]{\pfx{Dr} \fnm{Joergen W.} \spfx{van der} \sur{Ploeg} \sfx{IV} \tanm{Poet Laureate} 
%%                 \dgr{MSc, PhD}}\email{iauthor@gmail.com}
%%=============================================================%%

\author{Hayden Brown\footnote{Department of Mathematics and Statistics, University of Nevada, Reno\\Email address: haydenb@nevada.unr.edu\\ORCID: 0000-0002-2975-2711}}

%\author[1]{\fnm{Hayden} \sur{Brown} ORCID: 0000-0002-2975-2711}\email{haydenb@nevada.unr.edu}%0000-0002-2975-2711

%\affil[1]{\orgdiv{Department of Mathematics and Statistics}, \orgname{University of Nevada, Reno}, \orgaddress{1664 \street{N. Virginia Street}, \city{Reno}, \postcode{89557}, \state{Nevada}, \country{USA}}}

%%==================================%%
%% sample for unstructured abstract %%
%%==================================%%

\abstract{Given a geometric Brownian motion wealth process, a log-Normal lower bound is constructed for the returns of a regular investing schedule. The distribution parameters of this bound are computed recursively. For dollar cost averaging (equal amounts in equal time intervals), parameters are computed in closed form. A lump sum (single amount at time 0) investing schedule is described which achieves a terminal wealth distribution that matches the wealth distribution indicated by the lower bound. Results are applied to annual returns of the S\&P Composite Index from the last 150 years. Among data analysis results, the probability of negative returns is less than 2.5\% when annual dollar cost averaging lasts over 40 years.}

\keywords{Dollar cost averaging, Lump sum, Returns estimation, S\&P, Standard and Poor, Geometric Brownian motion}

%\acknowledgments{I am grateful to Andrey Sarantsev for helpful remarks about early drafts of this manuscript.}

\pacs[JEL Classification]{C22, E27, G11}

\pacs[MSC Classification]{60E15, 60J70, 91B70}

\pacs[Acknowledgments]{I am grateful to Andrey Sarantsev for helpful remarks about early drafts of this manuscript.}

\maketitle

\section{Introduction}\label{sec1}
Consider an investing schedule where investment amounts and times are predetermined. When the price process is geometric Brownian motion, the return distribution for such an investing schedule lacks simple expression. In light of this shortcoming, a log-Normal lower bound on the returns is constructed. Then the lower bound is applied to dollar cost averaging (DCA) and lump sum (LS) investing using historic price data from the S\&P Composite Index. 

Figure \ref{fig:genpic} illustrates the general situation where investment amounts and times are predetermined. DCA and LS are special cases of this general situation. DCA invests a constant amount at equidistant time steps. LS invests a lump sum at the initial time, without making any additional investments. In what follows, note that return indicates terminal wealth divided by the total invested.

\begin{figure}[h] %900 by 450
\setlength{\unitlength}{0.4cm}
\begin{picture}(20,18)
\thicklines
\put(4,17){General}
\put(1,7){\begin{turn}{90}\text{time}\end{turn}}
\put(4,1){\vector(0,1){15}}
\put(4,1){\line(-1,0){.4}}
\put(2.8,.7){0}
\put(5,1){\vector(-1,0){1}}
\put(5,.8){{\footnotesize buy $\frac{c_0}{X(0)}$ shares}}
\put(4,3){\line(-1,0){.4}}
\put(2.8,2.7){$t_1$}
\put(5,3){\vector(-1,0){1}}
\put(5,2.8){{\footnotesize buy $\frac{c_1}{X(t_1)}$ shares}}
\put(4,8){\line(-1,0){.4}}
\put(2.8,7.7){$t_2$}
\put(5,8){\vector(-1,0){1}}
\put(5,7.8){{\footnotesize buy $\frac{c_2}{X(t_2)}$ shares}}
\put(4,11){\line(-1,0){.4}}
\put(2.8,10.7){$t_3$}
\put(5,11){\vector(-1,0){1}}
\put(5,10.8){{\footnotesize buy $\frac{c_3}{X(t_3)}$ shares}}
\put(4,13){\line(-1,0){.4}}
\put(2.8,12.7){$t_4$}
\put(5,13){\vector(-1,0){1}}
\put(5,12.8){{\footnotesize buy $\frac{c_4}{X(t_4)}$ shares}}

\put(14,17){DCA}
\put(14,1){\vector(0,1){15}}
\put(14,1){\line(-1,0){.4}}
\put(12.8,.7){0}
\put(15,1){\vector(-1,0){1}}
\put(15,.8){{\footnotesize buy $\frac{c_0}{X(0)}$ shares}}
\put(14,4){\line(-1,0){.4}}
\put(12.8,3.7){$1$}
\put(15,4){\vector(-1,0){1}}
\put(15,3.8){{\footnotesize buy $\frac{c_0}{X(1)}$ shares}}
\put(14,7){\line(-1,0){.4}}
\put(12.8,6.7){$2$}
\put(15,7){\vector(-1,0){1}}
\put(15,6.8){{\footnotesize buy $\frac{c_0}{X(2)}$ shares}}
\put(14,10){\line(-1,0){.4}}
\put(12.8,9.7){$3$}
\put(15,10){\vector(-1,0){1}}
\put(15,9.8){{\footnotesize buy $\frac{c_0}{X(3)}$ shares}}
\put(14,13){\line(-1,0){.4}}
\put(12.8,12.7){$4$}
\put(15,13){\vector(-1,0){1}}
\put(15,12.8){{\footnotesize buy $\frac{c_0}{X(4)}$ shares}}

\put(23.5,17){LS}
\put(23.5,1){\vector(0,1){15}}
\put(23.5,1){\line(-1,0){.4}}
\put(22.3,.7){0}
\put(24.5,1){\vector(-1,0){1}}
\put(24.5,.8){{\footnotesize buy $\frac{c_0}{X(0)}$ shares}}
\end{picture}
\caption{Illustrates investment in a particular stock using predetermined amounts and times. $X:[0,\infty)\to(0,\infty)$ is the share price, and the $c_k$ are positive constants. The ability to buy fractional shares is assumed.}
\label{fig:genpic}
\end{figure}
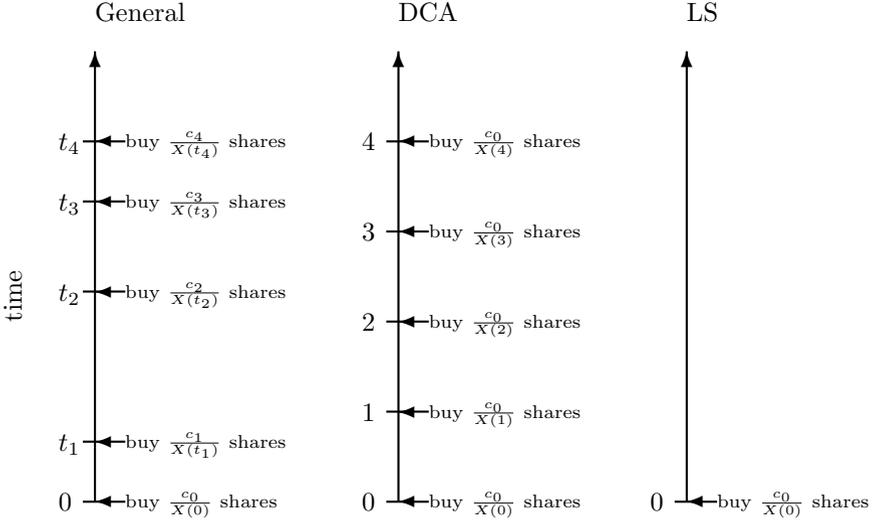

\subsection{Literature Review}
Much of the research on investing schedules with predetermined amounts and times focuses on comparing DCA with LS, provided the total amount invested and total length time invested are equal (e.g. \cite{eriksson2021dollar,kirkby2020analysis,knight1992nobody,leggio2003empirical,trainor2005within,williams1993lump}). A comparison is made between the DCA and LS return distributions, and whichever has the ``best'' return distribution is deemed the winner. Here, ``best" is determined via some risk-return assessment. In general, LS offers higher mean returns at the cost of higher standard deviation of returns \cite{williams1993lump}. With respect to Sharpe ratio, Sortino ratio and expected returns, LS is generally superior \cite{eriksson2021dollar,leggio2003empirical}. Using expected utility with constant relative risk aversion, LS is deemed superior in the formulation of \cite{knight1992nobody}, yet inferior in the formulation of \cite{kirkby2020analysis}. Using the first-passage time probability, DCA offers considerably better protection against large losses \cite{trainor2005within}. 

A comparison of DCA and LS where the total amount invested is not equal, but the total length time invested is equal, has been considered in \cite{rozeff1994lump}. In particular, the size of LS investment is adjusted so that it has the same mean terminal wealth as DCA. Then the variance of terminal wealth is compared. Similarly, the size of LS investment is adjusted so that it has the same variance of terminal wealth as DCA. Then the mean terminal wealth is compared. In the former case, LS offers the lower variance, and in the latter case, LS offer the higher mean. So LS is mean-variance dominant over DCA in this sense.

Another, less researched, way to compare DCA with LS involves holding the terminal wealth distribution constant, or at least approximately constant. Then the goal is to determine which investing schedule, DCA or LS, is ``best" in terms of total time invested and size and timing of investments. Here, ``best" is determined based on a particular investor's preference to make several smaller investments over a period of time, versus one lump sum investment over a shorter period of time. The log-normal lower bound presented here facilitates such a comparison.

More generally, it is desirable to compare DCA with LS when the total invested, total time invested, and/or return distribution is not constant. Then the goal is to determine which investing schedule, DCA or LS, is ``best" in terms of total time invested, size and timing of investments, and return distribution. The determination of ``best" can vary between investors, depending on their preferences. However, each investor's determination is likely dependent on the return distribution. Therefore, an expression, or at least an estimation, of the return distribution is needed. The log-normal lower bound presented here offers such an estimation.

Obtaining a simple theoretical expression of the return distribution for an investing schedule where investment amounts and times are predetermined is difficult. Complications arise because placing multiple investments at different times halts application of nice mathematical properties. The distribution has, however, been reasonably described in some special cases. The simplest case is with LS returns, which are often modeled using a log-Normal distribution. A more complicated case is continuous DCA. Using an asset price following geometric Brownian motion w.r.t. time, \cite{milevsky2003continuous} showed that the return for continuous DCA is integrated geometric Brownian motion. Note that the density of integrated geometric Brownian motion has a complicated expression, see \cite{dufresne2001integral,schroder2003integral}. Still, a simple theoretical expression of the return distribution for DCA remains elusive. It is possible to describe the return distribution empirically, like in \cite{kirkby2020analysis}, but a closed form expression is desirable. Considering the difficulty of this problem, an estimate of the return distribution is worth pursuing. In particular, a lower bound is desirable because it will provide a distribution that is no better than the actual return distribution. Then, if investors like the distribution of the lower bound, they will certainly like the actual return distribution.

To make this problem of finding a simple lower bound on returns more manageable, the assumption is made that returns between investment times are log-Normal. Then the return for a given predetermined investing schedule is a sum of dependent log-Normal random variables. From here, the goal is to find a good lower bound for this sum that has a standard distribution, preferably log-Normal. There are several ways to estimate the sum of log-Normal random variables as a log-Normal random variable \cite{beaulieu2004optimal,mehta2007approximating,fenton1960sum,schwartz1982distribution}. However, these estimates are not necessarily lower bounds. A lower bound is presented in \cite{beaulieu1994comparison}, but it lacks a standard distribution and requires independence between elements of the sum. A lower bound that allows dependence between elements of the sum is given by \cite{dhaene2002concept,dhaene2002conceptT}, but it is a lower bound in the convex order sense. The lower bound presented here achieves the goal, as it allows dependence in the summands and has a log-Normal distribution.

\subsection{Main Results}
Here, investment in exactly one asset is considered. Under the assumption that returns between time steps are independent random variables, recursive and closed form expression is given for the positive integer moments of returns (see Theorems \ref{recursion2} and \ref{closedform}). Under the additional assumption that the asset wealth process is geometric Brownian motion, a log-normal lower bound is given on the returns (see Theorems \ref{tG}, \ref{t1} and \ref{tcont}). Theorem \ref{tG} provides recursive expression of the lower bound for the general investing schedule. Theorems \ref{t1} and \ref{tcont} provide closed form expression of the lower bound for DCA and continuous DCA, respectively. Theorem \ref{terror} describes the error and an upper bound on the log-error for the lower bound.

In order to compare DCA with LS when the terminal wealth distribution is approximately constant, the lump sum discount is introduced (see Definition \ref{def:lsd}). The lump sum discount indicates the lump sum investment, in terms of size and time invested, needed to match the distribution of the terminal wealth lower bound for a given investing schedule. Although the given investing schedule need not be DCA, DCA is of particular interest because of its history being compared with LS. Theorem \ref{lslim} describes the limit of the lump sum discount when DCA is the given investing schedule. 

\subsection{Applications}
Results involving the lower bound of returns are applied to data from the S\&P Composite Index, specifically the annual data from 1871 to 2020, including reinvested dividends and adjustment for inflation. This S\&P index tracks a diversified portfolio of large US companies. In particular, it tracks the weighted average of stock prices for the largest US companies, where each company's weight is equal to its market capitalization. Funds tracking this index are very popular among investors. Data is taken from Robert Shiller's online data library, and further details on this data can be found in Section \ref{sec51}. 

First, the S\&P annual price data is fit to a geometric Brownian motion. Then, the main results are applied. Application of the lump sum discount shows a LS investing schedule that is no better than DCA. Outside of the lump sum discount, the goal of applications is not to produce a judgment on whether one investing schedule is better than another. Rather, the purpose is to evaluate how well the lower bound estimates returns and provide investors with informing figures to help decide which investing schedule best suits their preferences. The following two paragraphs discuss specific applications.

Theorem \ref{tG} is used to plot the .025, .5, and $.975$ quantiles for the lower bound of DCA returns and log-returns. The risk of incurring a loss is less than 2.5\% when the total length of investment is over 40 years. The Sharpe ratio for the lower bound of DCA log-returns is also plotted. Theorem \ref{terror} is used to plot the error and log-error for the lower bound of DCA returns. The error and log-error grow with time, and the lower bound is a better estimate of DCA returns when the number of years invested is less than 20. Next, the lump sum discount is illustrated for DCA with varying periods of investment.  When DCA is executed for less than 50 years, a one-time investment that is slightly larger than the total DCA investment, held for less than $\frac{2}{3}$ of the time, produces a terminal wealth distribution that is no better than the DCA. Provided the error is low, meaning the length of investment is less than 20 years, investors can achieve similar terminal wealth to DCA by using LS with a slightly larger total investment and less than $\frac{2}{3}$ the total time invested. 

Since Theorem \ref{tG} is general and need not be confined to DCA, it is applied to a hybrid of LS and DCA. In this hybrid, all investments after the first follow traditional DCA. The initial investment is allowed to vary. The .025, .5 and .975 quantiles are plotted for the lower bound of the hybrid log-returns. In addition, Theorem \ref{terror} is used to plot the error and log-error for the lower bound. Again, the lower bound is a better estimate of the hybrid returns when the number of years invested is less than 20.

\subsection{Organization}
Section \ref{sec2} provides the problem setup and main results. In particular, DCA is formally expressed, assumptions are given and theorems are stated. Section \ref{sec3} describes the data used for applications, validates assumptions needed to apply the theorems given in Section \ref{sec2} and applies those theorems using the data. Section \ref{sec4} provides closing remarks, including a discussion of related future research ideas. Appendix \ref{secA1} provides proofs of the theorems stated in Section \ref{sec2}.

\section{Definitions \& Main Results}\label{sec2}
\subsection{Notation}
Let $X:[0,\infty)\times\Omega\to(0,\infty)$ be a stochastic process defined on the probability space $(\Omega,\mathcal{F},\mathbb{P})$. When using $X(t,\omega)$ as a function of $\omega$ only, write $X(t)$ in place of $X(t,\omega)$. Let $\{t_k\}_{k=0}$ be a sequence in $[0,\infty)$ with $t_0=0$. Suppose that $c_k$ is invested at each time step $t_k$. Let $X_k=\frac{X(t_k)}{X(t_{k-1})}$ for $k=1,2,...$. The returns at time step $t_k$ are given by $R_k=(\sum_{j=0}^{k-1}c_j)^{-1}Y_k$ for $k=1,2,...$, where the $Y_k$ are computed recursively via
\begin{equation}
\begin{split}
Y_1&=c_0X_1,\\
Y_k&=X_k\cdot(Y_{k-1}+c_{k-1}),\quad k=2,3,...
\end{split}
\label{recursion}
\end{equation}
Note that $Y_k$ is the wealth at time $t_k$. The following abbreviations will be used to shorten descriptions: dollar cost averaging (DCA), lump sum (LS), upper bound (UB) and lower bound (LB). In all expressions, $\log$ indicates the natural logarithm.

\subsection{Assumptions}
The results require two assumptions.
\begin{unenumerate}
\item (i)$\quad$The $X_k$ are independent for $k=1,2,...$. 
\item (ii)$\ \ X(t)$ is geometric Brownian motion with $\log X(t)\sim\mathcal{N}(\mu t,\sigma^2t)$, where 
\item $\qquad\mu\in\mathbb{R}$, $\sigma>0$ and $t\geq0$. 
\end{unenumerate}
(i) is used to express the moments of returns. (i) and (ii) are used to obtain a log-normal lower bound on returns. 

\begin{remark}
\normalfont While the investing schedule considered here involves discrete investments, the continuous process of geometric Brownian motion is used because it provides nice structure when investment times are not equidistant. In particular, (ii) makes it easy to describe the distribution of $X_k$:
\begin{equation*}
\log X_k\sim\mathcal{N}(\mu(t_k-t_{k-1}),\ \sigma^2(t_k-t_{k-1})),\quad k=1,2,...
\end{equation*} 
\end{remark}

\subsection{Main Results}
Theorems \ref{recursion2} and \ref{closedform} require assumption (i). Theorem \ref{recursion2} offers recursive expression for the moments of $R_k$. Theorem \ref{closedform} offers closed form expression for the moments of $R_k$. The recursive expression is easy to compute and ideal for applications.

\begin{theorem}
For $n\in\mathbb{N}$, the $n$th moment of returns is given by $\mathbb{E}[R_k^n]=(\sum_{j=0}^{k-1}c_j)^{-n}\mathbb{E}[Y_k^n]$, where the $\mathbb{E}[Y_k^n]$ is found recursively via
\begin{equation*}
\begin{split}
\mathbb{E}[Y_1^n]&=c_0^n\mathbb{E}[X_1^n]\\
\mathbb{E}[Y_k^n]&=\mathbb{E}[X_k^n]\sum_{j=0}^n\binom{n}{j}c_{k-1}^{n-j}\mathbb{E}[Y_{k-1}^j],\quad k=2,3,...
\end{split}
\end{equation*}
\label{recursion2}
\end{theorem}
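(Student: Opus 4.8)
The plan is to establish the moment recursion by induction on $k$, exploiting the recursive structure of the wealth process in equation \eqref{recursion} together with independence assumption (i). First I would treat the base case: since $Y_1 = c_0 X_1$, we immediately have $\mathbb{E}[Y_1^n] = c_0^n \mathbb{E}[X_1^n]$, which matches the claimed formula. The factorization $\mathbb{E}[R_k^n] = (\sum_{j=0}^{k-1} c_j)^{-n} \mathbb{E}[Y_k^n]$ is purely algebraic, following from $R_k = (\sum_{j=0}^{k-1} c_j)^{-1} Y_k$ and the fact that the normalizing sum is a deterministic constant, so it can be pulled out of the expectation; thus the real content lies entirely in the recursion for $\mathbb{E}[Y_k^n]$.

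For the inductive step, the key observation is that $Y_k = X_k \cdot (Y_{k-1} + c_{k-1})$, and critically, $Y_{k-1}$ depends only on $X_1, \dots, X_{k-1}$, which by assumption (i) are independent of $X_k$. Therefore I would write
\begin{equation*}
\mathbb{E}[Y_k^n] = \mathbb{E}\bigl[X_k^n (Y_{k-1}+c_{k-1})^n\bigr] = \mathbb{E}[X_k^n]\,\mathbb{E}\bigl[(Y_{k-1}+c_{k-1})^n\bigr],
\end{equation*}
where the factorization of the expectation is justified precisely by the independence of $X_k$ from $Y_{k-1}$ (and hence from $(Y_{k-1}+c_{k-1})^n$, which is a measurable function of $X_1,\dots,X_{k-1}$). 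Then I would expand the remaining factor using the binomial theorem, $(Y_{k-1}+c_{k-1})^n = \sum_{j=0}^n \binom{n}{j} c_{k-1}^{n-j} Y_{k-1}^j$, and apply linearity of expectation to obtain $\mathbb{E}[(Y_{k-1}+c_{k-1})^n] = \sum_{j=0}^n \binom{n}{j} c_{k-1}^{n-j} \mathbb{E}[Y_{k-1}^j]$. Combining these two displays yields exactly the stated recursion.

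I expect the main obstacle to be the careful justification of the independence factorization rather than any computational difficulty. Specifically, one must verify that $Y_{k-1}$ is indeed a deterministic (measurable) function of $X_1, \dots, X_{k-1}$ only, which I would confirm by unwinding the recursion \eqref{recursion}: since $Y_1 = c_0 X_1$ and each subsequent $Y_j$ is built from $X_j$ and $Y_{j-1}$, an easy secondary induction shows $Y_{k-1} = \phi(X_1,\dots,X_{k-1})$ for some Borel function $\phi$. Given assumption (i), $X_k$ is independent of the vector $(X_1,\dots,X_{k-1})$ and hence of $(Y_{k-1}+c_{k-1})^n$, legitimizing the product rule for expectations. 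The binomial expansion and linearity are entirely routine once this independence is in place, so I would state them briefly. No integrability concerns arise beyond the implicit assumption that the relevant moments exist, which I would note holds automatically under assumption (ii) since log-normal variables possess finite positive integer moments.
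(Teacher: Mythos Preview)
Your proposal is correct and follows essentially the same approach as the paper: factor the expectation using independence of $X_k$ from $Y_{k-1}$, apply the binomial expansion, then use linearity. Your treatment is in fact more careful than the paper's, since you explicitly justify why $Y_{k-1}$ is a function of $X_1,\dots,X_{k-1}$ and remark on integrability, whereas the paper simply invokes independence, the binomial formula, and linearity in one line.
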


\begin{theorem}
For $n\in\mathbb{N}$, the $n$th moment of returns is given by 
\begin{equation*}
\mathbb{E}[R_k^n]=\frac{n!}{(\sum_{j=0}^{k-1}c_j)^{n}}\mathbb{E}[X_k^n]\sum_{0\leq j_1\leq j_2\leq...\leq j_k=n}\frac{1}{j_1!}\prod_{l=1}^{k-1}\frac{c_{l}^{j_{l+1}-j_l}\mathbb{E}[X_{l}^{j_l}]}{(j_{l+1}-j_l)!}.
\end{equation*}
\label{closedform}
\end{theorem}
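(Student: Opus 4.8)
The plan is to reduce the claim to a statement about the raw wealth moments $\mathbb{E}[Y_k^n]$ and then to prove that by iterating the recursion of Theorem~\ref{recursion2}. That theorem already supplies $\mathbb{E}[R_k^n]=(\sum_{j=0}^{k-1}c_j)^{-n}\,\mathbb{E}[Y_k^n]$, so after stripping the normalizing factor $(\sum_{j=0}^{k-1}c_j)^{-n}$ it suffices to show that $\mathbb{E}[Y_k^n]$ equals $n!\,\mathbb{E}[X_k^n]$ times the ordered sum on the right-hand side. Reattaching that factor at the very end then returns the stated moment of $R_k$.

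To unroll, I would repeatedly apply
$$\mathbb{E}[Y_k^n]=\mathbb{E}[X_k^n]\sum_{j=0}^{n}\binom{n}{j}c_{k-1}^{\,n-j}\,\mathbb{E}[Y_{k-1}^{j}]$$
down to the base value $\mathbb{E}[Y_1^{m}]=c_0^{m}\mathbb{E}[X_1^{m}]$. Each application introduces one new summation index bounded above by the index from the previous step, so after $k-1$ substitutions the summation range is exactly the chain $0\le j_1\le j_2\le\cdots\le j_k=n$. Assumption~(i) is what keeps this tractable: since the $X_l$ are independent, each factor $\mathbb{E}[X_l^{j_l}]$ separates out of the expectation at the moment its index $j_l$ is created, leaving a purely deterministic coefficient to be simplified. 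The accumulated powers of the invested amounts $c_0,\dots,c_{k-1}$ carry exponents equal to the successive gaps $j_{l+1}-j_l$ of the chain (with $c_0$ attached to $j_1$), matching the $c$-powers in the statement.

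The step I expect to be the main obstacle is the combinatorial collapse of the nested binomial coefficients into a single multinomial weight. Unrolling leaves a product $\prod_{l=1}^{k-1}\binom{j_{l+1}}{j_l}$ with $j_k=n$, and I would establish
$$\prod_{l=1}^{k-1}\binom{j_{l+1}}{j_l}=\frac{n!}{\,j_1!\,\prod_{l=1}^{k-1}(j_{l+1}-j_l)!\,}$$
either by a short secondary induction using the subset identity $\binom{a}{b}\binom{b}{c}=\binom{a}{c}\binom{a-c}{b-c}$, or directly by writing each binomial as a ratio of factorials and cancelling the interior $j_l!$ terms. This identity is precisely what converts the iterated binomial expansions into the shape $n!\cdot\frac{1}{j_1!}\prod_{l=1}^{k-1}\frac{1}{(j_{l+1}-j_l)!}$ demanded by the theorem.

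With that identity in hand the rest is assembly: the factorial weight, the $c$-power exponents, and the factored moments $\mathbb{E}[X_l^{j_l}]$ align term-by-term with the summand in the statement, and dividing by $(\sum_{j=0}^{k-1}c_j)^{n}$ yields $\mathbb{E}[R_k^n]$. I would anchor the induction by checking $k=1$ directly: there the chain forces $j_1=n$, the empty product equals $1$, and the prefactor collapses to $\mathbb{E}[X_1^n]$ (carrying the leading $c_0$ weight from $j_1$), in agreement with $\mathbb{E}[Y_1^n]$.
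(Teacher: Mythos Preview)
Your proposal is correct and follows essentially the same route as the paper: both arguments iterate the recursion of Theorem~\ref{recursion2} (the paper packages this as an induction on $k$), producing the chain $0\le j_1\le\cdots\le j_k=n$ and then collapsing the nested binomial coefficients into the single multinomial weight. The telescoping identity $\prod_{l=1}^{k-1}\binom{j_{l+1}}{j_l}=n!/\bigl(j_1!\prod_{l=1}^{k-1}(j_{l+1}-j_l)!\bigr)$ you isolate is exactly the algebra hidden in the paper's step $\binom{n}{j}j!\,\frac{1}{j_1!}\prod\frac{1}{(j_{l+1}-j_l)!}\to \frac{n!}{j_1!(n-j)!}\prod\frac{1}{(j_{l+1}-j_l)!}$.
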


All remaining theorems require assumptions (i) and (ii). Definition \ref{defz} recursively constructs the lower bound of returns. Theorem \ref{tG} describes the lower bound of returns recursively. Under the conditions $c_k=1$ and $t_k=k$ for $k=0,1,...$, Theorem \ref{t1} describes the lower bound of DCA returns in closed form. Note that setting $c_k=1$ and $t_k=k$ for $k=0,1,...$ ensures that 1 unit is invested at non-negative integer times. No other investments are made. Thus, a constant investment amount is made at equidistant time steps. Theorem \ref{tcont} provides an upper bound on the cumulative distribution function of returns when $c_k=\frac{1}{n}$ and $t_k=\frac{k}{n}$ for $k=0,1,...,n-1$, and the limit is taken as $n\to\infty$. This is the continuous version of DCA.

\begin{definition}
Set $Z_1=R_1$ and 
\begin{equation*}
Z_k=\frac{X_k}{\sum_{j=0}^{k-1}c_j}(a_k+c_{k-1})\Bigg(\frac{Z_{k-1}}{a_k}\sum_{j=0}^{k-2}c_j\Bigg)^{\frac{a_k}{a_k+c_{k-1}}},\quad k=2,3,...,
\end{equation*}
where $a_k=\exp\mathbb{E}\Big[\log\big(Z_{k-1}\sum_{j=0}^{k-2}c_j\big)\Big]$. 
\label{defz}
\end{definition}

\begin{remark}
$R_k$ and $Z_k$ are invariant to a rescaling of the investment amounts. In particular, if the substitution $c_j\leftarrow \lambda c_j$ is made for each $j$, then $R_k$ and $Z_k$ do not change.
\end{remark}

\begin{theorem}
For every $k=1,2,...$, $Z_k\leq R_k$ w.p.1 and $\log Z_k\sim\mathcal{N}(m_k,v_k)$, where $m_k$ and $v_k$ are given recursively via 
\begin{equation*}
\begin{split}
m_1&=\mu t_1,\quad v_1=\sigma^2t_1,\\
a_k&=\exp(m_{k-1})\sum_{j=0}^{k-2}c_j,\quad b_k=\frac{a_k}{a_k+c_{k-1}}\\
m_k&=\log\frac{a_k+c_{k-1}}{\sum_{j=0}^{k-1}c_j}+\mu(t_k-t_{k-1}),\\
v_k&=b_k^2v_{k-1}+\sigma^2(t_k-t_{k-1}).
\end{split}
\end{equation*}
\label{tG}
\end{theorem}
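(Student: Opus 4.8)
The plan is to prove the two claims of Theorem~\ref{tG} by induction on $k$, simultaneously establishing the inequality $Z_k \le R_k$ and the log-normality $\log Z_k \sim \mathcal{N}(m_k, v_k)$. The base case $k=1$ is immediate: by Definition~\ref{defz}, $Z_1 = R_1 = X_1$ (since $c_0^{-1} Y_1 = X_1$), and assumption (ii) gives $\log X_1 \sim \mathcal{N}(\mu t_1, \sigma^2 t_1)$, matching $m_1 = \mu t_1$ and $v_1 = \sigma^2 t_1$. So the heart of the argument is the inductive step.

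For the inductive step, I would first verify the algebraic consistency of the two recursions. Note that $a_k = \exp \mathbb{E}[\log(Z_{k-1}\sum_{j=0}^{k-2}c_j)]$ from Definition~\ref{defz}; using the inductive hypothesis $\mathbb{E}[\log Z_{k-1}] = m_{k-1}$, this becomes $a_k = \exp(m_{k-1})\sum_{j=0}^{k-2}c_j$, which is exactly the formula for $a_k$ in the theorem statement, and $b_k = a_k/(a_k+c_{k-1})$ is the exponent appearing in Definition~\ref{defz}. The key inequality driving $Z_k \le R_k$ is the \emph{weighted AM--GM inequality}. Recall $R_k = (\sum_{j=0}^{k-1}c_j)^{-1} Y_k$ and $Y_k = X_k(Y_{k-1}+c_{k-1})$. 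I would write $Y_{k-1}+c_{k-1} = Z_{k-1}\sum_{j=0}^{k-2}c_j \cdot (Y_{k-1})/(Z_{k-1}\sum_{j=0}^{k-2}c_j) + c_{k-1}$, but more cleanly: since $Z_{k-1}\le R_{k-1}$ gives $Z_{k-1}\sum_{j=0}^{k-2}c_j \le Y_{k-1}$, it suffices to bound $Y_{k-1}+c_{k-1}$ below by the geometric-mean expression $(a_k+c_{k-1})(Z_{k-1}\sum_{j=0}^{k-2}c_j / a_k)^{b_k}$ appearing in $Z_k$. Writing $W = Z_{k-1}\sum_{j=0}^{k-2}c_j$, the target is $W + c_{k-1} \ge (a_k+c_{k-1})(W/a_k)^{b_k}$ with weights $b_k = a_k/(a_k+c_{k-1})$ and $1-b_k = c_{k-1}/(a_k+c_{k-1})$; this is precisely weighted AM--GM applied to $W$ and $c_{k-1}$ with these weights, so $b_k W + (1-b_k)c_{k-1} \ge W^{b_k}c_{k-1}^{1-b_k}$, and multiplying through by $(a_k+c_{k-1})/a_k^{b_k}$ (using $a_k = a_k^{b_k}\cdot a_k^{1-b_k}$) recovers the bound after a short rearrangement.

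Once the AM--GM step is in place, multiplying by $X_k/\sum_{j=0}^{k-1}c_j > 0$ (which preserves the inequality) and using $Z_{k-1}\le R_{k-1}$ yields $Z_k \le (\sum_{j=0}^{k-1}c_j)^{-1} X_k (Y_{k-1}+c_{k-1}) = R_k$ almost surely. For the distributional claim, I would take logarithms in Definition~\ref{defz}:
\begin{equation*}
\log Z_k = \log X_k + \log\frac{a_k+c_{k-1}}{\sum_{j=0}^{k-1}c_j} + b_k\Big(\log Z_{k-1} + \log\textstyle\sum_{j=0}^{k-2}c_j - \log a_k\Big).
\end{equation*}
Since $\log a_k = m_{k-1} + \log\sum_{j=0}^{k-2}c_j$, the two logarithmic constants cancel the $\log\sum$ terms, leaving $\log Z_k = \log X_k + \log\frac{a_k+c_{k-1}}{\sum_{j=0}^{k-1}c_j} + b_k(\log Z_{k-1} - m_{k-1})$. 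Now $\log X_k \sim \mathcal{N}(\mu(t_k-t_{k-1}), \sigma^2(t_k-t_{k-1}))$ and is independent of $Z_{k-1}$ by assumption (i) (as $Z_{k-1}$ is a function of $X_1,\dots,X_{k-1}$), and $\log Z_{k-1}\sim\mathcal{N}(m_{k-1},v_{k-1})$ by induction. Thus $\log Z_k$ is an affine combination of independent normals, hence normal; taking expectation gives $m_k = \mu(t_k-t_{k-1}) + \log\frac{a_k+c_{k-1}}{\sum_{j=0}^{k-1}c_j}$ (the centered term vanishes), and computing variance gives $v_k = \sigma^2(t_k-t_{k-1}) + b_k^2 v_{k-1}$, matching the stated recursions exactly.

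The main obstacle I anticipate is organizing the AM--GM step correctly, specifically confirming that the weights $b_k$ and $1-b_k$ align with the exponent in Definition~\ref{defz} and that the constant factor $(a_k+c_{k-1})/a_k^{b_k}$ collapses cleanly. The subtlety is that $a_k$ is not an arbitrary constant but is deliberately chosen as $\exp\mathbb{E}[\log W]$ — this is what makes the expectation of the centered log-term vanish and, simultaneously, is the linearization point at which the tangent-line (AM--GM) bound is taken; verifying that this single choice of $a_k$ serves both the inequality and the clean distributional recursion is the crux. The independence bookkeeping for assumption~(i) is routine but must be stated, since it is what guarantees the summed variance formula rather than a formula with a covariance cross-term.
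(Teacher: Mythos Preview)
Your proposal is correct and matches the paper's proof: induction, with the inequality step being exactly the paper's Lemma~\ref{llb} (the tangent-line/weighted AM--GM bound you invoke), followed by the same independence-based computation of $m_k$ and $v_k$ from $\log Z_k = \log X_k + \text{const.} + b_k(\log Z_{k-1}-m_{k-1})$. The only slip is that AM--GM should be applied to $W/a_k$ and $1$ with weights $b_k,\,1-b_k$ (yielding $\frac{W+c_{k-1}}{a_k+c_{k-1}} \ge (W/a_k)^{b_k}$ directly), not to $W$ and $c_{k-1}$ --- the rearrangement you sketch from $b_k W + (1-b_k)c_{k-1} \ge W^{b_k}c_{k-1}^{1-b_k}$ does not close --- but this is precisely the organizational wrinkle you anticipated, and the fix is immediate.
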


\begin{theorem}
Suppose $\mu\neq0$, $c_k=1$ and $t_k=k$ for $k=0,1,...$. Then for every $k=1,2,...$, $Z_k\leq R_k$ w.p.1 and $\log Z_k\sim\mathcal{N}(m_k,v_k)$, where 
\begin{equation*}
\begin{split}
m_k&=\mu+\log\frac{\exp(\mu k)-1}{k(\exp\mu-1)}\\
v_k&=\sigma^2\cdot\frac{-1+\exp(\mu k)[2(1+\exp\mu)+\exp(\mu k)\big(-1-k+(k\exp\mu-2)\exp\mu\big)]}{(\exp(\mu k)-1)^2(\exp(2\mu)-1)}.
\end{split}
\end{equation*}
Furthermore, $v_k$ is increasing with $k$. If $\mu>0$, then $m_k$ is increasing with $k$. If $\mu<0$, then $m_k$ is decreasing with $k$. 
\label{t1}
\end{theorem}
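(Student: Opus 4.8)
The plan is to specialize the recursion of Theorem~\ref{tG} to the DCA parameters $c_k=1$, $t_k=k$, verify that the stated closed forms solve the resulting recursion, and then extract the two monotonicity statements. Since Theorem~\ref{tG} already gives $Z_k\le R_k$ w.p.1 and $\log Z_k\sim\mathcal{N}(m_k,v_k)$ for the recursively defined $m_k,v_k$, nothing new is needed for those claims; the whole task is an identity check plus two monotonicity arguments. With $c_k=1$ one has $\sum_{j=0}^{k-1}c_j=k$, $\sum_{j=0}^{k-2}c_j=k-1$ and $t_k-t_{k-1}=1$, so the recursion collapses to
\begin{equation*}
a_k=(k-1)\exp(m_{k-1}),\quad b_k=\frac{a_k}{a_k+1},\quad m_k=\log\frac{a_k+1}{k}+\mu,\quad v_k=b_k^2v_{k-1}+\sigma^2,
\end{equation*}
with $m_1=\mu$, $v_1=\sigma^2$. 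The hypothesis $\mu\ne0$ guarantees $\exp(\mu)\ne1$, so every geometric sum and quotient below is well defined.

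First I would settle $m_k$ by induction. Assuming the stated formula for $m_{k-1}$, the definition of $a_k$ gives $a_k=\exp(\mu)(\exp(\mu(k-1))-1)/(\exp(\mu)-1)$, and the key simplification is the geometric-sum identity
\begin{equation*}
a_k+1=\frac{\exp(\mu k)-1}{\exp(\mu)-1}=\sum_{j=0}^{k-1}\exp(\mu j).
\end{equation*}
Substituting into $m_k=\log\frac{a_k+1}{k}+\mu$ reproduces $m_k=\mu+\log\frac{\exp(\mu k)-1}{k(\exp(\mu)-1)}$ and closes the induction (the case $k=1$ gives $m_1=\mu$ directly). At the same time I record the closed forms $a_k=(\exp(\mu k)-\exp(\mu))/(\exp(\mu)-1)$ and, most usefully,
\begin{equation*}
b_k=\frac{\exp(\mu k)-\exp(\mu)}{\exp(\mu k)-1}=1-\frac{1}{\sum_{j=0}^{k-1}\exp(\mu j)},
\end{equation*}
which will drive both the $v_k$ computation and its monotonicity.

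Next I would solve the linear recursion for $v_k$ by unrolling it into $v_k=\sigma^2\sum_{i=1}^{k}\prod_{j=i+1}^{k}b_j^2$ (empty product $=1$). Writing $b_j=\exp(\mu)(\exp(\mu(j-1))-1)/(\exp(\mu j)-1)$ makes the inner product telescope:
\begin{equation*}
\prod_{j=i+1}^{k}b_j=\exp(\mu(k-i))\,\frac{\exp(\mu i)-1}{\exp(\mu k)-1}.
\end{equation*}
Squaring and summing leaves $v_k=\frac{\sigma^2}{(\exp(\mu k)-1)^2}\sum_{i=1}^{k}\exp(2\mu(k-i))(\exp(\mu i)-1)^2$, and expanding $(\exp(\mu i)-1)^2$ produces three geometric series in $i$ that evaluate in closed form. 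I expect this bookkeeping---collecting the three sums over the common denominator $(\exp(\mu k)-1)^2(\exp(2\mu)-1)$ and checking that the numerator matches the stated polynomial in $\exp(\mu)$ and $\exp(\mu k)$---to be the main obstacle, purely in the sense of being the most error-prone algebra; there is no conceptual difficulty once the telescoping is in hand.

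Finally, the two monotonicity claims follow cleanly and do not even require the closed form of $v_k$. For $m_k$, note that $\exp(m_k-\mu)=\frac1k\sum_{j=0}^{k-1}\exp(\mu j)$ is the running average of the geometric sequence $1,\exp(\mu),\exp(2\mu),\dots$; passing from $k$ to $k+1$ appends the term $\exp(\mu k)$, which exceeds the current average when $\mu>0$ and lies below it when $\mu<0$, so the average---hence $m_k$, since $\log$ is increasing---rises for $\mu>0$ and falls for $\mu<0$. For $v_k$, I would argue by induction that $v_k>v_{k-1}$: since $\exp(\mu)>0$, the sum $\sum_{j=0}^{k-1}\exp(\mu j)$ strictly increases with $k$, so $b_k=1-1/\sum_{j=0}^{k-1}\exp(\mu j)$ is strictly increasing and lies in $(0,1)$ for $k\ge2$. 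The base case $v_2=\sigma^2(1+b_2^2)>\sigma^2=v_1$ is immediate, and for the step $v_k-v_{k-1}=b_k^2v_{k-1}-b_{k-1}^2v_{k-2}>0$ because $b_k^2\ge b_{k-1}^2$ and $v_{k-1}>v_{k-2}>0$. This completes the proof.
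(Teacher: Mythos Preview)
Your proposal is correct and follows essentially the same route as the paper: induction to verify the closed form of $m_k$ from the Theorem~\ref{tG} recursion, the same telescoping product identity (up to reindexing) to reduce $v_k$ to a sum of three geometric series, and the same induction on $b_k$ to get $v_k$ increasing. The one genuine deviation is the monotonicity of $m_k$: the paper treats $k$ as a real variable and computes $\frac{d}{dk}m_k$, then argues the sign via an auxiliary function $f(x)=1+e^x(x-1)$; your running-average argument---$\exp(m_k-\mu)$ is the mean of $1,e^\mu,\dots,e^{\mu(k-1)}$, and appending $e^{\mu k}$ moves the mean in the direction dictated by the sign of $\mu$---is more elementary, stays entirely in the discrete setting, and avoids the calculus detour. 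Otherwise the two proofs are interchangeable.
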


\begin{theorem}
Let $\mathcal{R}_n$ denote the returns at $t=1$ given $c_k=\frac{1}{n}$ and $t_k=\frac{k}{n}$ for $k=0,1,...,n-1$. Then $\mathbb{P}(Z\leq x)\geq\lim_{n\to\infty}\mathbb{P}(\mathcal{R}_n\leq x)$ for all $x\in\mathbb{R}$, where 
\begin{equation*}
\log Z\sim\mathcal{N}\Big(\log\frac{\exp\mu-1}{\mu},\ \sigma^2\cdot\frac{(2\mu-3)\exp(2\mu)+4\exp\mu-1}{2\mu(\exp\mu-1)^2}\Big). 
\end{equation*}
\label{tcont}
\end{theorem}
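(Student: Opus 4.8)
The plan is to recognize continuous DCA with $n$ steps as a time-rescaled instance of the equidistant DCA of Theorem \ref{t1}, and then to pass to the limit in the resulting closed-form Normal parameters. First I would note that with $c_k=1/n$ and $t_k=k/n$ one has $t_k-t_{k-1}=1/n$, so $\log X_k\sim\mathcal N(\mu/n,\sigma^2/n)$; moreover the total invested equals $1$ and the wealth at $t=1=t_n$ gives $\mathcal R_n=R_n$ in the notation of \eqref{recursion}. Because $R_k$ and $Z_k$ are invariant under rescaling the investment amounts, replacing $c_k=1/n$ by $c_k=1$ changes nothing, so the construction of Definition \ref{defz} applied to this schedule coincides with that of Theorem \ref{t1} after the substitution $(\mu,\sigma^2)\mapsto(\mu/n,\sigma^2/n)$, evaluated at index $k=n$. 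Theorem \ref{tG} then yields $Z_n\le R_n=\mathcal R_n$ w.p.1, hence $\mathbb P(Z_n\le x)\ge\mathbb P(\mathcal R_n\le x)$ for every $n$ and $x$, while Theorem \ref{t1} gives $\log Z_n\sim\mathcal N(m_n,v_n)$ with $m_n,v_n$ read off its displayed formulas under $\mu\mapsto\mu/n$, $\sigma^2\mapsto\sigma^2/n$, $k\mapsto n$ (this uses $\mu\neq0$, implicit in the statement).

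Next I would evaluate $\lim_{n\to\infty}m_n$ and $\lim_{n\to\infty}v_n$. The mean is immediate: since $(\mu/n)\cdot n=\mu$, one gets $m_n=\mu/n+\log\frac{\exp\mu-1}{\,n(\exp(\mu/n)-1)\,}$, and $n(\exp(\mu/n)-1)\to\mu$ forces $m_n\to\log\frac{\exp\mu-1}{\mu}$, the claimed mean. The variance is the main calculation. The prefactor $\sigma^2/n$ must be balanced against the factor $\exp(2\mu/n)-1\sim2\mu/n$ in the denominator of the $v_k$ formula, which together produce the leading constant $\sigma^2/(2\mu)$ and the factor $(\exp\mu-1)^{-2}$. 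Within the bracket the delicate terms are the $O(n)$ pieces $-n+n\exp(2\mu/n)=n(\exp(2\mu/n)-1)\to2\mu$; collecting the surviving contributions gives the bracket limit $(2\mu-3)\exp(2\mu)+4\exp\mu-1$, so that $v_n\to\sigma^2\cdot\frac{(2\mu-3)\exp(2\mu)+4\exp\mu-1}{2\mu(\exp\mu-1)^2}$, the claimed variance.

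Finally I would close with a weak-convergence argument. Convergence of the Normal parameters $m_n\to m$ and $v_n\to v$ gives $\log Z_n\Rightarrow\log Z$, hence $Z_n\Rightarrow Z$; since $Z$ is log-Normal, its distribution function is continuous on $(0,\infty)$ and vanishes on $(-\infty,0]$ (every quantity here is positive), so it is continuous at every $x\in\mathbb R$ and $\mathbb P(Z_n\le x)\to\mathbb P(Z\le x)$ for all $x$. Letting $n\to\infty$ in $\mathbb P(Z_n\le x)\ge\mathbb P(\mathcal R_n\le x)$ then gives $\mathbb P(Z\le x)=\lim_n\mathbb P(Z_n\le x)\ge\limsup_n\mathbb P(\mathcal R_n\le x)$, which is the stated inequality whenever $\lim_n\mathbb P(\mathcal R_n\le x)$ exists. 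I expect the variance limit to be the only real obstacle: it requires expanding $\exp(\mu/n)$ and $\exp(2\mu/n)$ to the right order in $1/n$ and tracking several competing $O(1)$ and $O(n)$ terms so that the $1/n$ factors cancel; the reduction to Theorem \ref{t1} and the final weak-convergence step are routine.
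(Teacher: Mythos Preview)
Your proposal is correct and follows essentially the same route as the paper: reduce the $n$-step schedule to Theorem~\ref{t1} via the substitution $(\mu,\sigma^2)\mapsto(\mu/n,\sigma^2/n)$ at $k=n$, compute $\lim_n m_n$ and $\lim_n v_n$ from the closed forms, and pass to the limit in the cdf inequality $\mathbb P(Z_n\le x)\ge\mathbb P(\mathcal R_n\le x)$. The only cosmetic difference is that the paper justifies existence of $\lim_n\mathbb P(\mathcal R_n\le x)$ by identifying the limit of $\mathcal R_n$ with integrated geometric Brownian motion (citing Milevsky), whereas you route through a $\limsup$ and condition on existence; since the theorem statement presupposes the limit, this is immaterial.
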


Observe that $\log Z\sim\mathcal{N}(r_1\mu,\ r_2\sigma^2)$, where
\begin{equation*}
r_1=\frac{1}{\mu}\log\frac{\exp\mu-1}{\mu},\quad r_2=\frac{(2\mu-3)\exp(2\mu)+4\exp\mu-1}{2\mu(\exp\mu-1)^2}.
\end{equation*}
It is not hard to see that $r_1$ is increasing in $\mu$, $\lim_{\mu\to0^+}r_1=\frac{1}{2}$, and $\lim_{\mu\to\infty}r_1=1$. Similarly, $r_2$ is increasing in $\mu$, $\lim_{\mu\to0^+}r_2=\frac{1}{3}$, and $\lim_{\mu\to\infty}r_2=1$. The reason behind studying $r_1$ and $r_2$ is to describe a log-normal return that is no better than the continuous DCA of Theorem \ref{tcont}. When the total investment is equal and $r_2\leq r_1$, a lump sum investment for $r_2$ time units is no better than the continuous DCA of Theorem \ref{tcont}. Figure \ref{fig:r1r2} depicts $r_1$ and $r_2$ over $\mu$.

\begin{figure}[H]
  \includegraphics[width=\linewidth]{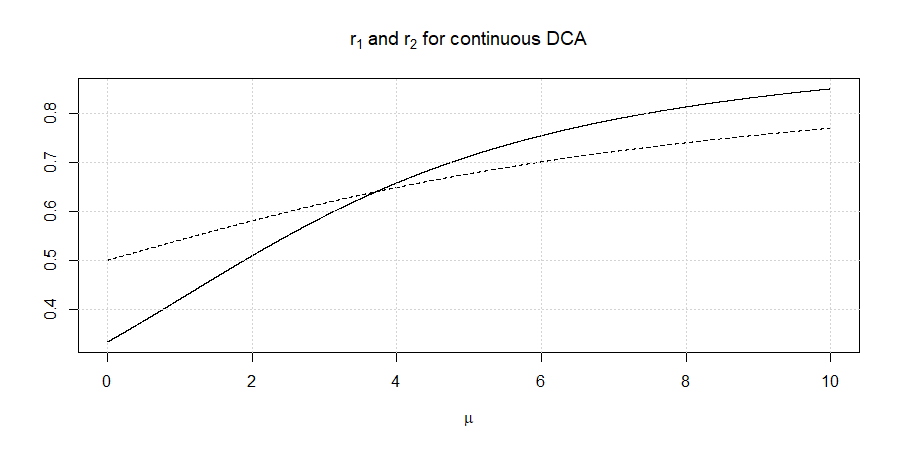}
  \caption{$r_1$ (dashed) and $r_2$ (solid), where $\mu\in(0,10]$.}
  \label{fig:r1r2}
\end{figure}

Theorem \ref{terror} describes the error and an upper bound on the log-error for the lower bound of returns. Furthermore, the upper bound on the log-error is also an upper bound on the relative error. 
\begin{theorem}
Define the error $E_k:=\lvert R_k-Z_k\rvert$ and let $b_k=\frac{\exp(\mu k)-\exp\mu}{\exp(\mu k)-1}$ for $k\in\mathbb{N}$. The expected error is given by 
\begin{equation*}
\mathbb{E}[E_k]=\mathbb{E}[R_k]-\exp(m_k+\frac{v_k}{2}).
\end{equation*}
Define the log-error $E^{\log}_k:=\lvert\log R_k-\log Z_k\rvert$. The expected log-error has upper bound
\begin{equation*}
\mathbb{E}[E^{\log}_k]\leq\log y+\sum_{j=1}^J\frac{\mathbb{E}[(R_k-y)^j]}{(-1)^{j-1}jy^j}-m_k,
\end{equation*}
where $y\in(0,\infty)$ and $J\in\{1,3,5,...\}$. Furthermore, the expected relative error $\mathbb{E}[\frac{E_k}{R_k}]$ has upper bound $\mathbb{E}[E^{\log}_k]$. \\
Expressions for $\mathbb{E}[R_k^j]$, $m_k$ and $v_k$ are given in Theorems \ref{recursion2}, \ref{closedform}, \ref{tG} and \ref{t1}.
\label{terror}
\end{theorem}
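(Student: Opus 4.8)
The plan rests on the almost-sure inequality $Z_k \le R_k$ from Theorem~\ref{tG}, which removes the absolute values in both error quantities: because $R_k = (\sum_{j=0}^{k-1} c_j)^{-1} Y_k$ is built from the strictly positive $X_k$ and positive $c_j$ through \eqref{recursion}, we have $R_k > 0$ almost surely, so $\log R_k$ is well defined and $E_k = R_k - Z_k$, $E^{\log}_k = \log R_k - \log Z_k$. Since $\log Z_k \sim \mathcal{N}(m_k, v_k)$ makes $Z_k$ log-normal, its mean is $\exp(m_k + v_k/2)$, and linearity of expectation gives the first claim, $\mathbb{E}[E_k] = \mathbb{E}[R_k] - \exp(m_k + v_k/2)$, at once.

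For the log-error, the same normality gives $\mathbb{E}[\log Z_k] = m_k$, so $\mathbb{E}[E^{\log}_k] = \mathbb{E}[\log R_k] - m_k$, and the problem reduces to bounding $\mathbb{E}[\log R_k]$ from above. I would establish the pointwise inequality
$$\log x \le \log y + \sum_{j=1}^J \frac{(-1)^{j-1}}{j}\left(\frac{x-y}{y}\right)^j, \qquad x,y>0,\ J \text{ odd},$$
by expanding $\log(x/y) = \log(1+u)$ with $u=(x-y)/y>-1$ about $u=0$ and controlling the degree-$J$ Lagrange remainder: the $(J+1)$st derivative of $\log(1+u)$ equals $(-1)^J J!\,(1+\xi)^{-(J+1)}$, so the remainder is $\frac{(-1)^J}{J+1}\,\frac{u^{J+1}}{(1+\xi)^{J+1}}\le 0$ for odd $J$ because $u^{J+1}\ge 0$. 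Taking expectations and using $\frac{(-1)^{j-1}}{j}\mathbb{E}[((R_k-y)/y)^j] = \frac{\mathbb{E}[(R_k-y)^j]}{(-1)^{j-1} j y^j}$ yields the stated bound for every $y\in(0,\infty)$ and odd $J$.

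For the relative error, I would write $E_k/R_k = 1 - W$ and $E^{\log}_k = -\log W$ with $W := Z_k/R_k \in (0,1]$. The inequality $\mathbb{E}[E_k/R_k] \le \mathbb{E}[E^{\log}_k]$ then follows pointwise from $1 - W \le -\log W$, equivalently the elementary $\log W \le W - 1$, which holds for all $W > 0$.

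The main obstacle is the Taylor-polynomial inequality for the logarithm: one must pin down the sign of the Lagrange remainder and confirm it for all $u > -1$, not merely on the interval $|u|\le 1$ where the series converges, so that the bound survives no matter how large $R_k$ becomes. The remaining pieces are immediate consequences of log-normality and the two standard scalar inequalities above. I would also remark that the auxiliary $b_k$ in the statement is the DCA specialization of the $b_k$ in Theorem~\ref{tG} and enters only through the recursion for $v_k$, not the error formulas themselves.
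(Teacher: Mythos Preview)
Your proposal is correct and matches the paper's proof in structure for the expected error, the reduction of the log-error to bounding $\mathbb{E}[\log R_k]$, and the relative-error inequality via $\log W \le W-1$. The one genuine difference is how you establish the pointwise Taylor bound $\log x \le T_J(x)$ for odd $J$: you invoke the Lagrange form of the remainder for $\log(1+u)$ and read off its sign from $(-1)^J u^{J+1}(1+\xi)^{-(J+1)} \le 0$, which is clean and valid for all $u>-1$ since $\xi$ lies strictly between $0$ and $u$. The paper instead computes $\frac{d}{dx}T_J(x)$ directly, collapses the resulting sum via a finite geometric series to the closed form $\bigl((x/y-1)^J+1\bigr)/x$, and compares it to $\frac{d}{dx}\log x = 1/x$; since the difference $(x/y-1)^J/x$ changes sign at $x=y$ for odd $J$ and $T_J(y)=\log y$, the inequality follows. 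Your Lagrange-remainder route is shorter and more standard; the paper's derivative computation is more explicit and avoids appealing to the remainder theorem, at the cost of the geometric-series algebra. Both settle your stated ``main obstacle'' (validity for all $u>-1$, not just $|u|\le 1$) without issue.
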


Schedules like DCA support smaller investments in periodic increments. This structure fits well with investors who receive monthly or biweekly paychecks. Lump sum investing is ideal, but investors may not have a large amount to invest at once. The lump sum discount addresses the question of how much the strategy of investing smaller amounts over time costs relative to the lump sum strategy. In particular, relatively how much and for relatively how long would a lump sum investment need to be implemented in order to achieve the same wealth distribution as the lower bound for wealth at time $t_k$, which is $Z_k\sum_{j=0}^{k-1}c_j$.

\begin{definition}
The lump sum discount at time step $t_k$ is given by 
\begin{equation*}
\Bigg(\frac{x_k}{\big(\sum_{j=0}^{k-1}c_j\big)^{-1}},\ \frac{s_k}{t_k}\Bigg),
\end{equation*}
where $s_k=\sigma^{-2}v_k$ and $x_k=\big(\sum_{j=0}^{k-1}c_j\big)\exp(m_k-\mu s_k)$. 
\label{def:lsd}
\end{definition}

The lump sum discount is constructed by setting the distribution of $Z_k\sum_{j=0}^{k-1}c_j$ equal to the terminal wealth distribution of a lump sum investment. Both are log-normal under the geometric Brownian motion assumption, so it suffices to set their log-means and log-variances equal to each other. In particular, denote the initial lump sum investment with $x_k$ and the length of investment with $s_k$. Then set $\log(x_k)+\mu s_k=\log\big(\sum_{j=0}^{k-1}c_j\big)+m_k$ and $s_k\sigma^2=v_k$.

To understand the lump sum discount, consider this example for DCA, using the investing schedule given in Theorem \ref{t1}. A lump sum discount of $(\frac{1}{2},\frac{4}{5})$ at time 20 indicates that investing half as much in lump sum for $\frac{4}{5}$ as long achieves the same wealth distribution as the DCA lower bound. When relative error is low, the actual DCA wealth is close to its lower bound. In such situations, the lump sum discount indicates the lump sum investment that will give approximately the same wealth distribution as the DCA. Even when relative error is high, the lump sum discount always indicates a lump sum investment that will perform no better than the DCA. So if an investor is deciding between DCA and the lump sum investment indicated by its lump sum discount, the investor should choose DCA. 

Theorem \ref{lslim} describes the limiting behavior of the lump sum discount for DCA, and Figure \ref{fig:limxk} shows how $\lim_{k\to\infty}x_k$ depends on $\mu$. 

\begin{theorem}
Suppose $\mu>0$, $c_k=1$ and $t_k=k$ for $k=0,1,...$. Then 
\begin{equation*}
\lim_{k\to\infty}x_k=\frac{\exp\Big(\mu\cdot\frac{\exp(2\mu)+2\exp\mu}{\exp(2\mu)-1}\Big)}{\exp\mu-1},
\end{equation*}
and the lump sum discount has limit $\lim_{k\to\infty}(\frac{x_k}{k},\frac{s_k}{k})=(0,1)$. 
\label{lslim}
\end{theorem}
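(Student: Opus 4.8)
The plan is to substitute the closed-form expressions for $m_k$ and $v_k$ from Theorem~\ref{t1} into Definition~\ref{def:lsd} and then extract the behavior as $k\to\infty$. Because $\mu>0$, the factor $\exp(\mu k)$ grows faster than any power of $k$, and this dominance drives the whole argument. First I would simplify the two relevant quantities. Since $s_k=\sigma^{-2}v_k$, the $\sigma^2$ cancels and $s_k$ equals exactly the rational function multiplying $\sigma^2$ in the expression for $v_k$. For $x_k=k\exp(m_k-\mu s_k)$, I would substitute $\exp(m_k)=\exp(\mu)\frac{\exp(\mu k)-1}{k(\exp\mu-1)}$ so that the prefactor $k$ cancels, giving the clean form
\begin{equation*}
x_k=\exp(\mu)\,\frac{\exp(\mu k)-1}{\exp\mu-1}\,\exp(-\mu s_k).
\end{equation*}

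The crux of the proof is a refined asymptotic for $s_k$, namely
\begin{equation*}
s_k=k-\frac{2\exp\mu+1}{\exp(2\mu)-1}+o(1),\qquad k\to\infty.
\end{equation*}
To establish this I would form $s_k-k$ as a single fraction over the common denominator $(\exp(\mu k)-1)^2(\exp(2\mu)-1)$. After multiplying out, the leading terms of order $k\exp(2\mu k)$ cancel between $s_k$ and $k$, leaving a numerator whose dominant term is of order $\exp(2\mu k)$ against a denominator of order $\exp(2\mu k)$. Dividing and letting $k\to\infty$, so that the surviving contributions of order $k\exp(\mu k)$ and $k$ are negligible against $\exp(2\mu k)$, yields the stated constant $-\frac{2\exp\mu+1}{\exp(2\mu)-1}$.

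With this in hand the first limit follows quickly. Writing $s_k=k+\delta_k$ with $\delta_k\to-\frac{2\exp\mu+1}{\exp(2\mu)-1}$, I factor $\exp(-\mu s_k)=\exp(-\mu k)\exp(-\mu\delta_k)$ and combine $\exp(-\mu k)$ with the prefactor to get $\frac{\exp(\mu k)-1}{\exp\mu-1}\exp(-\mu k)=\frac{1-\exp(-\mu k)}{\exp\mu-1}\to\frac{1}{\exp\mu-1}$. Hence $x_k\to\frac{\exp(\mu)}{\exp\mu-1}\exp\big(\mu\frac{2\exp\mu+1}{\exp(2\mu)-1}\big)$, and combining the two exponents via $\mu+\mu\frac{2\exp\mu+1}{\exp(2\mu)-1}=\mu\frac{\exp(2\mu)+2\exp\mu}{\exp(2\mu)-1}$ gives exactly the claimed value of $\lim_{k\to\infty}x_k$. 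Finally, for the lump sum discount: since $x_k$ converges to a finite limit, $x_k/k\to0$; and since $s_k/k=1-\frac{2\exp\mu+1}{(\exp(2\mu)-1)k}+o(1/k)\to1$, we obtain $\lim_{k\to\infty}(\frac{x_k}{k},\frac{s_k}{k})=(0,1)$.

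The main obstacle is the second-order asymptotic of $s_k$: the leading behavior $s_k\sim k$ is not enough, because it is precisely the constant correction term that survives after the exponential factors cancel in $x_k$. Pinning that constant down requires carefully tracking the subleading terms of order $\exp(2\mu k)$ (as opposed to $k\exp(2\mu k)$) in both numerator and denominator, which is where the bulk of the computation lies.
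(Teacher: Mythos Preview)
Your proof is correct and follows essentially the same route as the paper's: both hinge on the second-order expansion $s_k=v_k/\sigma^2=k-\dfrac{2\exp\mu+1}{\exp(2\mu)-1}+o(1)$ obtained from the closed form for $v_k$ in Theorem~\ref{t1}, then combine it with $\exp(m_k)$ to read off $\lim_{k\to\infty}x_k$ and hence the limit of the lump sum discount. Your write-up simply spells out the cancellation and the recombination of exponents in more detail than the paper does.
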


\begin{figure}[H]
  \includegraphics[width=\linewidth]{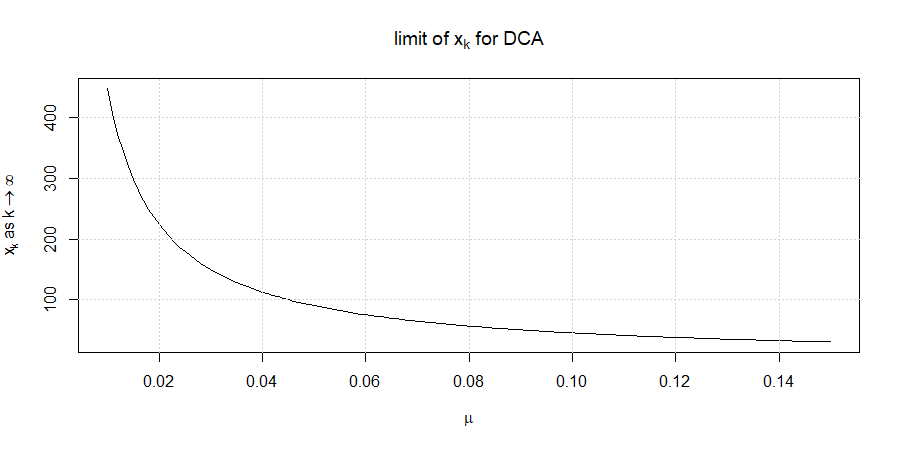}
  \caption{Using Theorem \ref{lslim}, $\lim_{k\to\infty}x_k$ is illustrated for $\mu\in[.01,.15]$.}
  \label{fig:limxk}
\end{figure}

\section{Applications \& Data Analysis}\label{sec3}
\subsection{Data}\label{sec51}
Main results are applied to annual data from the S\&P Composite Index from 1871 to 2020. The data was taken from \url{http://www.econ.yale.edu/~shiller/data.htm} and is collected for easy access at \url{https://github.com/HaydenBrown/Investing}. Here, S\&P Composite Index refers to three indexes: Cowles and Associates from 1871 to 1926, Standard \& Poor 90 from 1926 to 1957 and Standard \& Poor 500 from 1957 to 2020. All were developed to summarize the US stock market in terms of market capitalization \cite{wilson2002}. Companies are given different weights in an index based on their individual market capitalization. The S\&P 90 consists of 90 companies and the S\&P 500 consists of 500 companies. For an overview of the S\&P 500, see \url{https://www.spglobal.com/spdji/en/indices/equity/sp-500/}. The Cowles and Associates index is a backward extension of the S\&P 90 index. Relevant variables from the data are described below. 
\begin{table}[h]
\begin{center}
\caption{Data variable descriptions}
\begin{tabular}{ |c|l| } 
\hline
\textbf{Notation} & \textbf{Description} \\
 \hline
 I & average monthly close of the S\&P composite index \\ 
 \hline
 D & dividend per share of the S\&P composite index \\ 
 \hline
 C & January consumer price index \\ 
 \hline
\end{tabular}
\end{center}
\end{table}
Inflation-adjusted annual returns are computed using the consumer price index, the S\&P Composite Index price and the S\&P Composite Index dividend. Use the subscript $n$ to denote the $n$th year of $C$, $S$ and $D$. Then the inflation-adjusted return for year $n$ is given by $\frac{I_{n+1}+D_n}{I_n}\cdot\frac{C_n}{C_{n+1}}$. From here on, refer to inflation-adjusted returns simply as returns. 

\subsection{Set-up}
The quantile-quantile, autocorrelation function and partial autocorrelation function plots given in Figure \ref{fig:qqacfpacf} show that assumptions (i) and (ii) are acceptable. The QQ plot has some deviations in the tails, but not enough to rule out assumption (ii). In fact, the p-value of the Kolmogorov-Smirnov test is 0.59, so the null hypothesis that log-returns are Normal is not rejected. 
\begin{figure}[h]
  \includegraphics[width=\linewidth]{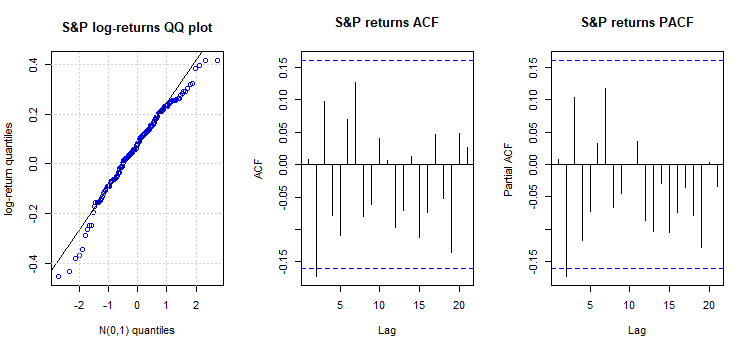}
  \caption{Left: The quantile-quantile plot of annual S\&P log-returns vs the standard Normal distribution. Middle: The autocorrelation function of S\&P returns. Right: The partial autocorrelation function of S\&P returns.}
  \label{fig:qqacfpacf}
\end{figure}
For application of the main results, all that is needed is $\mu$ and $\sigma$. The sample mean of S\&P log-returns is 0.0658, and the sample standard deviation of S\&P log-returns is 0.1690. Therefore $\mu=0.0658$ and $\sigma=0.1690$. Note that the unit on the time domain is years, and applications only use annual investment (i.e. $t_k=k$ for $k=0,1,...$).

\subsection{DCA quantiles, Sharpe ratio and error}
In Figure \ref{fig:bothreturns}, quantiles of the DCA lower bound are produced to give investors an idea of what returns they might experience in terms of the best, worst and median. The scaling on log-returns is easier to read because the 97.5\% quantile for actual returns increases so quickly. Observe the convexity in the lower bound of the 95\% confidence interval for the log-returns. The minimum of the 2.5\% quantile occurs at about 10 years. In contrast, the median and 97.5\% quantiles are strictly increasing. Note that the risk of incurring a loss is less than 2.5\% when the total length of investment is over 40 years. 

\begin{figure}[h] %900 by 450
  \includegraphics[width=\linewidth]{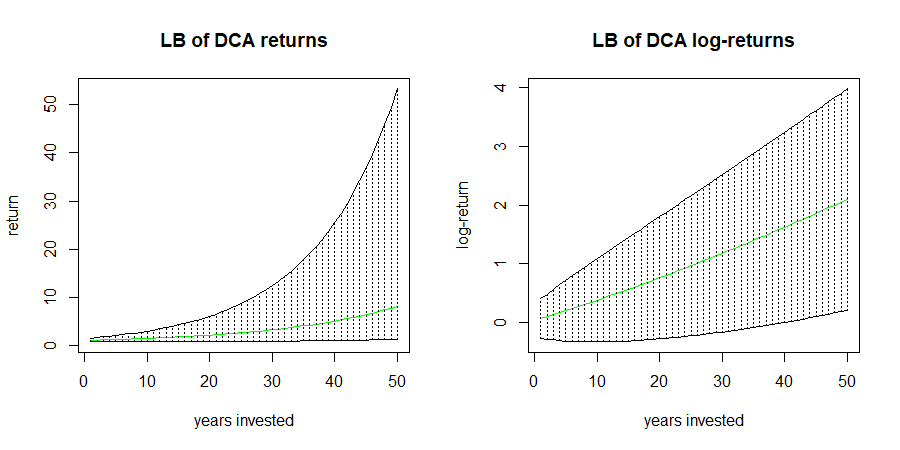}
  \caption{Given $\mu=0.0658$, $\sigma=0.1690$ and $k=1,2,...,50$. The left figure shows $Z_k$ quantiles, and the right figure shows $\log Z_k$ quantiles. The green line indicates the median return. The black lines indicate the upper and lower bound of the 95\% confidence interval.
 }
  \label{fig:bothreturns}
\end{figure}

The Sharpe ratios in Figure \ref{fig:logsharpe} are increasing with time, so investors get a better deal on risk-return as DCA is executed for a longer amount of time. Note the concavity in Sharpe ratios, indicating the risk-return benefit gained by increasing length of DCA investment decreases with time.

\begin{figure}[h]
  \includegraphics[width=\linewidth]{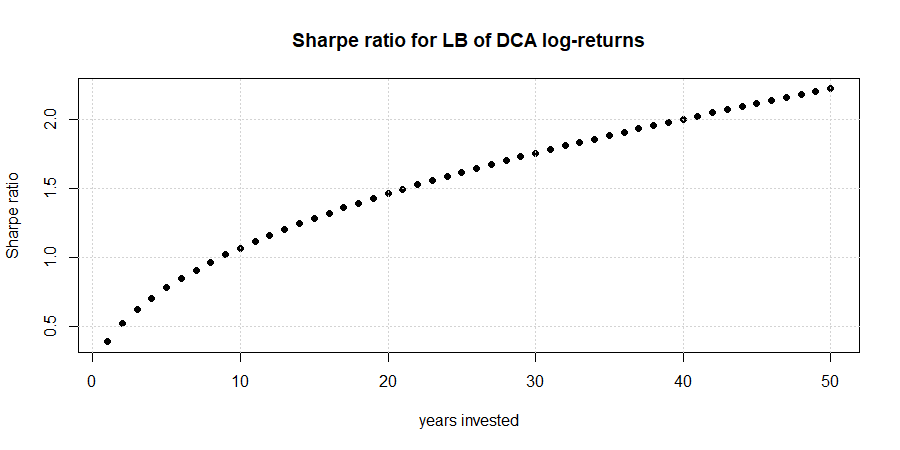}
  \caption{Given $\mu=0.0658$, $\sigma=0.1690$ and $k=1,2,...,50$. The figure shows the Sharpe ratio for the lower bound of log-returns: $\frac{\mathbb{E}[\log Z_k]}{\sqrt{\mathbb{V}[\log Z_k]}}$.}
  \label{fig:logsharpe}
\end{figure}

Figure \ref{fig:botherrors} indicates the expected error of the lower bound increases with time. However, expected returns also increase with time. So it makes sense to consider relative error in order to determine the relative size of the error between returns $R_k$ and their lower bound $Z_k$. The plot of the log-error upper bound, which is also an upper bound for the relative error by Theorem \ref{terror}, shows that the error is expected to be less than $\frac{1}{10}$th of the actual returns when time invested is less than 20 years. This indicates $Z_k$ is a good approximation for $R_k$ for shorter times invested. The plot of log-error gives such a high bound on relative error for longer times invested, like 40 years, that $Z_k$ cannot be claimed a good approximation of $R_k$. It is certainly possible that $Z_k$ is a good approximation for $R_k$ when time invested is 40 years, but the bound in Figure \ref{fig:botherrors} is too high to support such a claim.

\begin{figure}[h]
  \includegraphics[width=\linewidth]{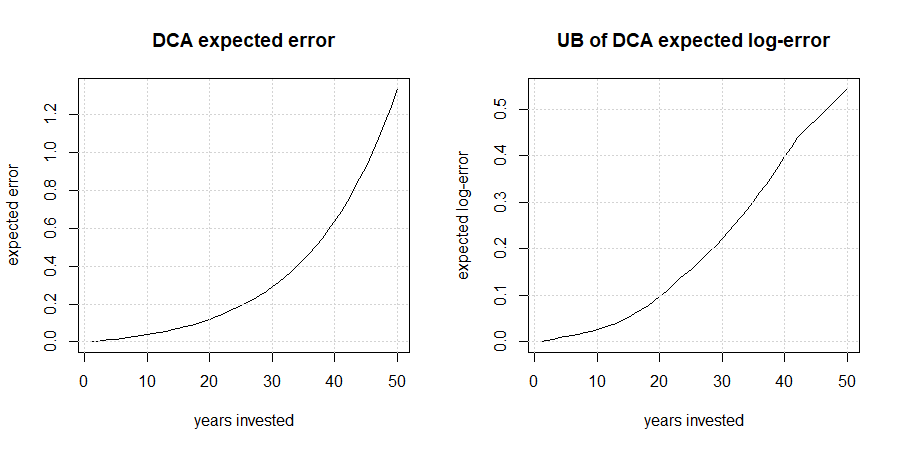}
  \caption{Given $\mu=0.0658$, $\sigma=0.1690$ and $k=1,2,...,50$. The left figure shows $\mathbb{E}[E_k]$, and the right figure shows the upper bound on DCA $\mathbb{E}[E_k^{\log}]$. In computing the log-error, $J$ ranged through the odd numbers from 1 to 21 and for each $k$, $y$ ranged through $\{\exp(-4+.01r)\cdot\mathbb{E}[R_k]:\ r=0,1,2,...,800\}$. See Theorem \ref{terror} for details on $J$ and $y$.}
  \label{fig:botherrors}
\end{figure}

\subsection{Lump sum discount w.r.t. DCA}
Figure \ref{fig:lsd} indicates that a lump sum investment which is slightly larger than the total DCA investment, executed for considerably less time, offers the same wealth distribution as $Z_k\sum_{j=0}^{k-1}c_j$, which is the lower bound of the DCA wealth distribution, $Y_k$. When $Z_k\sum_{j=0}^{k-1}c_j$ is a good approximation to $Y_k$, meaning the total years invested is less than 20 (see Figure \ref{fig:botherrors}), the terminal wealth distribution of such a lump sum investment is close to the actual wealth distribution of the DCA.

\begin{figure}[h]
  \includegraphics[width=\linewidth]{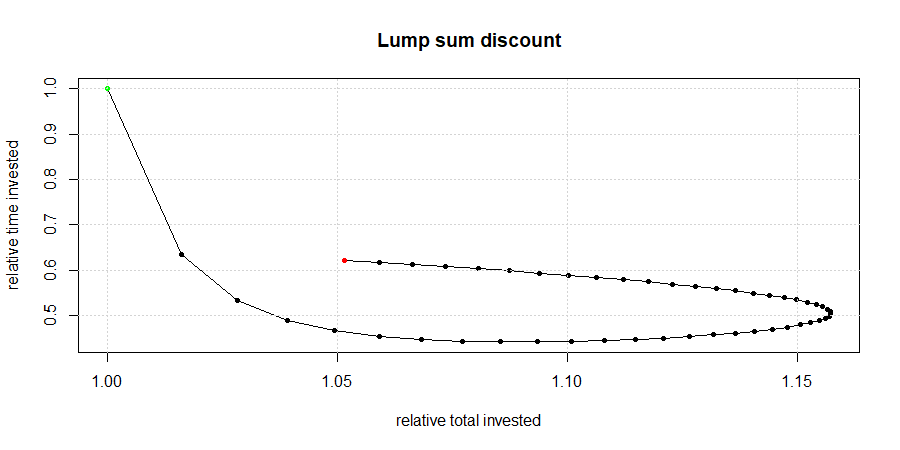}
  \caption{Lump sum discount, given $\mu=0.0658$, $\sigma=0.1690$ and $k=1,2,...,50$. The green point indicates the discount at time 1. Discounts for successive times are found by connecting the points, eventually reaching the discount for time 50, which is indicated by the red point.}
  \label{fig:lsd}
\end{figure}

\subsection{DCA-LS hybrid quantiles and error}
Figures \ref{fig:sametotal}, \ref{fig:Esame}, \ref{fig:diftotal} and \ref{fig:Edif} show quantiles and errors for combinations of LS and DCA. In each example, there is a lump sum investment at year $0$, and then DCA is executed for years 1, 2, 3, etc.. The ratio between the lump sum and DCA investment is varied. 

\begin{figure}[h]
  \includegraphics[width=\linewidth]{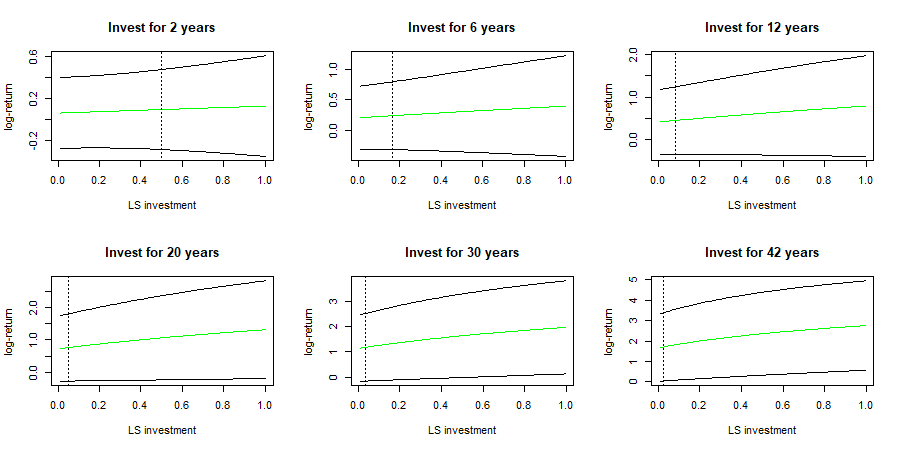}
  \caption{Given $\mu=0.0658$, $\sigma=0.1690$ and $t_{k}=k$ for $k\in\mathbb{N}\cup\{0\}$. Investments are stopped at $T=2,6,12,20,30,42$ years. For each $T$, $c_0$ is given by the horizontal axis, and $c_k=\frac{1-c_0}{T-1}$ for $k=1,2,...,T$. The green line indicates the median return. The black lines indicate the upper and lower bound of the 95\% confidence interval. The vertical dotted line indicates where DCA occurs over the entire sequence of investments, which is when $c_0=c_1$.}
  \label{fig:sametotal}
\end{figure}

\begin{figure}[h]
  \includegraphics[width=\linewidth]{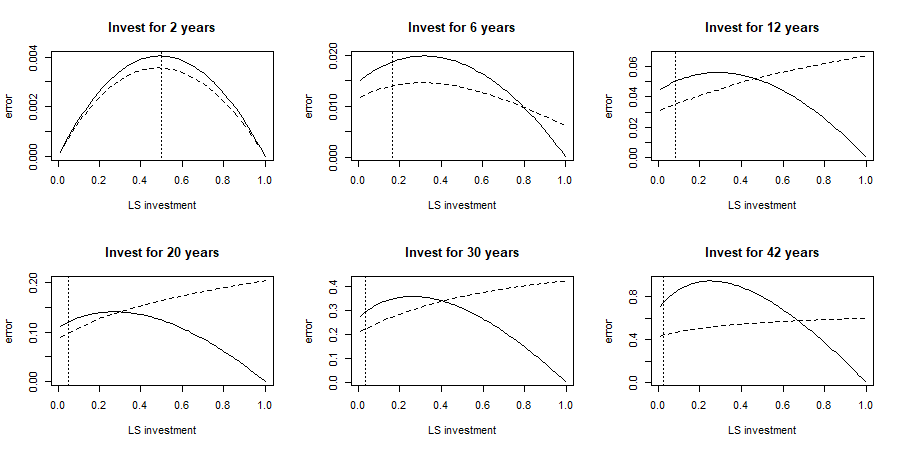}
  \caption{Error and log-error upper bound for Figure \ref{fig:sametotal}. Error is the solid line, log-error upper bound is the dashed line, and the vertical dotted line indicates where DCA occurs over the entire sequence of investments.}
  \label{fig:Esame}
\end{figure}

\begin{figure}[h]
  \includegraphics[width=\linewidth]{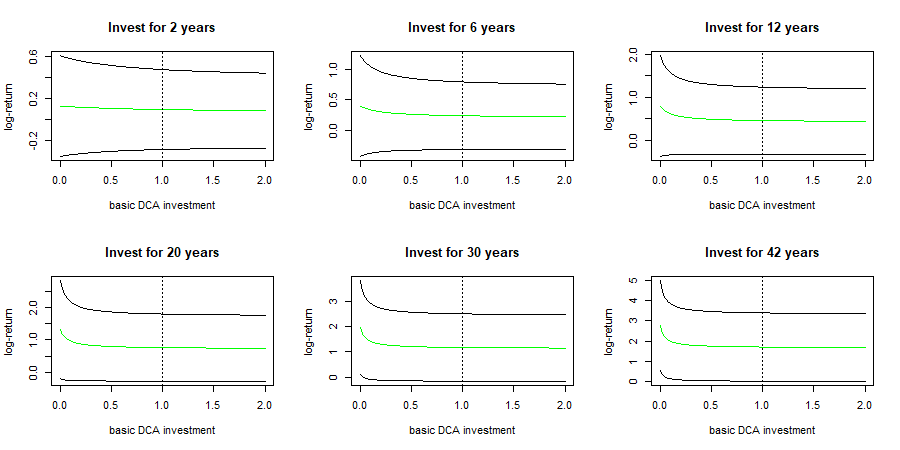}
   \caption{Given $\mu=0.0658$, $\sigma=0.1690$ and $t_{k}=k$ for $k\in\mathbb{N}\cup\{0\}$. Investments are stopped at $T=2,6,12,20,30,42$ years. For each $T$, $c_0=1$; all $c_k$ for $k>0$ are equal and their shared value is given by the horizontal axis. The green line indicates the median return. The black lines indicate the upper and lower bound of the 95\% confidence interval. The vertical dotted line indicates where DCA occurs over the entire sequence of investments, which is when $c_0=c_1$.}
  \label{fig:diftotal}
\end{figure}

\begin{figure}[h]
  \includegraphics[width=\linewidth]{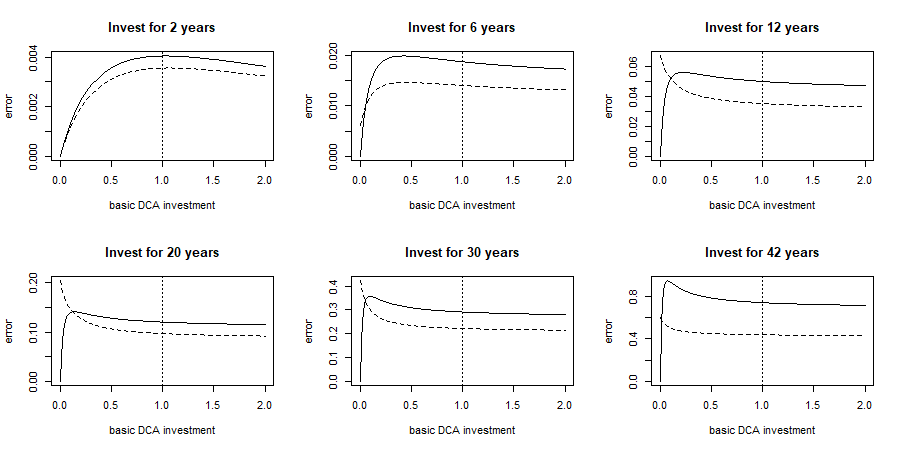}
  \caption{Error and log-error upper bound for Figure \ref{fig:diftotal}. Error is the solid line, log-error upper bound is the dashed line, and the vertical dotted line indicates where DCA occurs over the entire sequence of investments.}
  \label{fig:Edif}
\end{figure}

In Figures \ref{fig:sametotal} and \ref{fig:diftotal}, a lump sum investment is made at time 0, and then DCA is executed for the remaining time. Figure \ref{fig:sametotal} uses an constant total investment of 1 and varies how much of that total investment is made in lump sum at time 0. Figure \ref{fig:diftotal} uses a constant lump sum investment and varies the subsequent DCA investment. Note that the total invested is not constant in Figure \ref{fig:diftotal}. The vertical dotted line in each figure indicates when $c_0=c_1$, meaning DCA is executed over the entire sequence of investments.

In Figure \ref{fig:sametotal}, the .025, .5 and .975 quantiles increase as the investments are executed over a longer period of time. So given a constant lump sum investment, the returns improve as length of investment increases. For investment lengths of at least 20 years, the vertical dotted line, indicating where DCA is executed over the entire sequence of investments, offers nearly the worst return distribution for each investment length. Increasing the lump sum investment increases the .975 and .5 quantiles for all investment lengths. Increasing the lump sum investment increases the .025 quantile for investment lengths of at least 20 years. So investment lengths of at least 20 years have better return distributions when the lump sum investment is increased. Investment lengths less than 20 years offer potential for higher returns when the lump sum investment is increased, but there is the decreasing .025 quantile to factor in. Observe the concavity in quantiles for the 30 and 42 year investments; a small increase in the lump sum investment has a more positive impact on the return distribution when the initial lump sum investment is closer to 0. 

Statements in the previous paragraph assume the lower bound on returns is a good approximation for actual returns. If the error is high, the statements have less support. Figure \ref{fig:Esame} provides the error and log-error upper bound for Figure \ref{fig:sametotal}. The error and log-error are quite low for investment lengths of 2, 6 and 12 years. All lengths of investment have low error when the lump sum investment is close to 1. 

In Figure \ref{fig:diftotal}, the .025, .5 and .975 quantiles increase as the investments are executed over a longer period of time. So given a constant DCA investment, the returns improve as length of investment increases. Observe that the quantiles have asymptotic limits as the DCA investment increases. These limits are simply the DCA lower bound quantiles for the investment length minus 1 year. This is because the lump sum investment of 1 at time 0 becomes very small compared to the DCA investment as the DCA investment is increased arbitrarily. For investments lengths of at least 20 years, the quantiles decrease as the DCA investment increases. So for longer investments, a smaller DCA investment, relative to the lump sum investment, offers the best returns. The situation is similar for investment lengths less than 20 years, but there is the increasing .025 quantile to factor in. Given any investment length, DCA investments greater than 1 produce similar quantiles. Given an investment length of at least 20, DCA investments greater than .25 produce similar quantiles.

Like before, statements in the previous paragraph assume the lower bound on returns is a good approximation for actual returns. If the error is high, the statements have less support. Figure \ref{fig:Edif} provides the error and log-error upper bound for Figure \ref{fig:diftotal}. The error and log-error are quite low for investment lengths of 2, 6 and 12 years. All lengths of investment have low error when the DCA investment is close to 0. Like the quantiles of Figure  \ref{fig:diftotal} the error and log-error have asymptotic limits as the DCA investment increases. These limits are the error and log-error upper bound of the DCA lower bound for the investment length minus 1 year.

\section{Conclusions \& Further Research}\label{sec4}
The lower bound for DCA returns presented here is constructed by taking advantage of the fact that a linear combination of two independent Normal random variables is also Normal. Similar logic can be used to construct a lower bound for DCA returns when the wealth process, $X(t)$, is a L\'evy process, and for each $t$, $X(t)=\exp(\alpha_tY+\beta_t)$, where $\alpha_t,\beta_t\in\mathbb{R}$ and $Y$ is a L\'evy alpha-stable distribution. Then the lower bound for DCA returns will take the form $\exp(\alpha Y+\beta)$, where $\alpha,\beta\in\mathbb{R}$. This more general construction of the lower bound allows for situations where log-returns do not fit a Normal distribution. For example, this construction covers situations where log-returns have heavier tails than the Normal distribution, which is a common occurrence when studying stock price data. 

Here, expression of the moments of returns is only used to approximate the log-error of the lower bound. For the sake of potential future research, note that expression of the moments of returns is especially important in expected utility theory; see \cite{fishburn1970utility}. When the return distribution has compact support, many utility functions can be approximated well with a finite Taylor series. Applying the linearity of expectation to the Taylor series reduces expected utility to a function of the moments of returns. This sort of technique is demonstrated in \cite{conine1981diversification}. Moments of the return distribution can also be used to approximate the return distribution; see \cite{john2007techniques} for a review. 

\begin{appendices}

\section{Proofs}\label{secA1}

\subsubsection*{Theorem \ref{recursion2}}
\begin{proof}
$\mathbb{E}[Y_1^n]=c_0^n\mathbb{E}[X_1^n]$ follows trivially from \eqref{recursion}. Applying independence, the binomial formula and linearity of expectation in successive order,
\begin{equation*}
\begin{split}
\mathbb{E}[Y_k^n]&=\mathbb{E}[X_k^n]\mathbb{E}[(Y_{k-1}+c_{k-1})^n]\\
&=\mathbb{E}[X_k^n]\mathbb{E}[\sum_{j=0}^n\binom{n}{j}c_{k-1}^{n-j}Y_{k-1}^j]\\
&=\mathbb{E}[X_k^n]\sum_{j=0}^n\binom{n}{j}c_{k-1}^{n-j}\mathbb{E}[Y_{k-1}^j].
\end{split}
\end{equation*}
Again by linearity of expectation, $\mathbb{E}[R_k^n]=(\sum_{j=0}^{k-1}c_j)^{-n}\mathbb{E}[Y_k^n]$.
\end{proof}

\subsubsection*{Theorem \ref{closedform}}
\begin{proof}
Here is a proof by induction. The claim is obviously true for $k=1$. Suppose the claim holds for all $k'<k$, where $k\in\mathbb{N}\setminus\{1\}$. Then by Theorem \ref{recursion2} and some algebraic manipulation,
\begin{equation*}
\begin{split}
\mathbb{E}[Y_k^n]&=\mathbb{E}[X_k^n]\sum_{j=0}^n\binom{n}{j}c_{k-1}^{n-j}\mathbb{E}[Y_{k-1}^j]\\
&=\mathbb{E}[X_k^n]\sum_{j=0}^n\binom{n}{j}c_{k-1}^{n-j}(\sum_{i=0}^{k-2}c_i)^{j}\mathbb{E}[R_{k-1}^j]\\
&=\mathbb{E}[X_k^n]\sum_{j=0}^n\binom{n}{j}j!c_{k-1}^{n-j}\mathbb{E}[X_{k-1}^j]\sum_{0\leq j_1\leq j_2\leq...\leq j_{k-1}=j}\prod_{l=1}^{k-2}\frac{c_{l}^{j_{l+1}-j_l}\mathbb{E}[X_{l}^{j_l}]}{j_1!(j_{l+1}-j_l)!}\\
&=n!\mathbb{E}[X_k^n]\sum_{j=0}^n\sum_{0\leq j_1\leq j_2\leq...\leq j_{k-1}=j}\frac{c_{k-1}^{n-j}\mathbb{E}[X_{k-1}^j]}{j_1!(n-j)!}\prod_{l=1}^{k-2}\frac{c_{l}^{j_{l+1}-j_l}\mathbb{E}[X_{l}^{j_l}]}{(j_{l+1}-j_l)!}\\
&=n!\mathbb{E}[X_k^n]\sum_{0\leq j_1\leq j_2\leq...\leq j_{k}=n}\frac{1}{j_1!}\prod_{l=1}^{k-1}\frac{c_{l}^{j_{l+1}-j_l}\mathbb{E}[X_{l}^{j_l}]}{(j_{l+1}-j_l)!}.
\end{split}
\end{equation*}
So the claim holds for $k$ because $\mathbb{E}[R_k^n]=(\sum_{j=0}^{k-1}c_j)^{-n}\mathbb{E}[Y_k^n]$.
\end{proof}

\begin{lemma}
For all $a,c,x\in(0,\infty)$, 
\begin{equation*}
(a+c)\Big(\frac{x}{a}\Big)^{\frac{a}{a+c}}\leq x+c.
\end{equation*}
Moreover, there is equality at $x=a$.
\label{llb}
\end{lemma}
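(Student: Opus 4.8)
The plan is to collapse the two free parameters $a$ and $c$ into a single convexity parameter by rescaling. Introduce $u=\frac{x}{a}\in(0,\infty)$ and $t=\frac{a}{a+c}$, and observe that $t\in(0,1)$ with $1-t=\frac{c}{a+c}$, since $a,c>0$. Dividing the claimed inequality through by the positive quantity $a+c$, the left-hand side becomes $u^{t}$ and the right-hand side becomes $\frac{a}{a+c}u+\frac{c}{a+c}=tu+(1-t)$. Hence the entire statement is equivalent to the single-variable inequality
\begin{equation*}
u^{t}\leq tu+(1-t),\qquad u\in(0,\infty),\ t\in(0,1).
\end{equation*}

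Next I would dispatch this reduced inequality by a standard tangent-line (concavity) argument. Since $0<t<1$, the map $f(u)=u^{t}$ is strictly concave on $(0,\infty)$, so it lies weakly below its tangent line at any point; taking the tangent at $u=1$, where $f(1)=1$ and $f'(1)=t$, gives exactly $u^{t}\leq 1+t(u-1)=tu+(1-t)$. Equivalently, this is the weighted AM--GM inequality applied to the values $u$ and $1$ with weights $t$ and $1-t$, namely $u^{t}\cdot 1^{\,1-t}\leq tu+(1-t)\cdot 1$, or Bernoulli's inequality $(1+y)^{t}\leq 1+ty$ with $y=u-1>-1$. Any of these yields the bound, together with the fact that strict concavity forces equality precisely at the tangency point $u=1$.

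Finally I would translate back: $u=1$ means $\frac{x}{a}=1$, i.e.\ $x=a$, which recovers the asserted equality case, and substituting $u=\frac{x}{a}$, $t=\frac{a}{a+c}$ back into $u^{t}\leq tu+(1-t)$ and multiplying by $a+c$ returns the original statement. The only genuine step is the opening substitution; once the inequality is written in the homogeneous form $u^{t}\leq tu+(1-t)$, it is a textbook consequence of concavity of $u\mapsto u^{t}$, so I do not anticipate a real obstacle beyond bookkeeping the weights and confirming $t\in(0,1)$.
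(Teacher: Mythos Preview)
Your proof is correct. The paper's own argument proceeds differently: rather than substituting, it fixes $a,c$ and compares the $x$-derivatives of the two sides directly, computing $\frac{d}{dx}\big[(a+c)(x/a)^{a/(a+c)}\big]=(a/x)^{c/(a+c)}$ versus $\frac{d}{dx}[x+c]=1$, and observing that the former exceeds $1$ for $x<a$ and falls below $1$ for $x>a$; since the two sides agree at $x=a$, the difference $x+c-(a+c)(x/a)^{a/(a+c)}$ attains its global minimum of zero there. Your reduction via $u=x/a$ and $t=a/(a+c)$ to $u^{t}\leq tu+(1-t)$ is cleaner and more conceptual: it identifies the lemma as nothing more than the weighted AM--GM inequality (equivalently the tangent-line bound for the strictly concave map $u\mapsto u^{t}$ on $(0,\infty)$), so no calculation is needed once the substitution is made, and the equality case $u=1$ drops out of strict concavity. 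The paper's calculus route is self-contained and avoids citing a named inequality, while your approach makes the underlying structure transparent; both are short and rigorous.
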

\begin{proof}
The equality at $x=a$ is obvious. Observe that 
\begin{equation*}
\frac{d}{dx}\left[(a+c)\Big(\frac{x}{a}\Big)^{\frac{a}{a+c}}\right]=\Big(\frac{a}{x}\Big)^{\frac{c}{a+c}},\quad \frac{d}{dx}[x+c]=1.
\end{equation*}
So $\frac{d}{dx}\left[(a+c)\Big(\frac{x}{a}\Big)^{\frac{a}{a+c}}\right]<\frac{d}{dx}[x+c]$ when $a<x$ and $\frac{d}{dx}\left[(a+c)\Big(\frac{x}{a}\Big)^{\frac{a}{a+c}}\right]>\frac{d}{dx}[x+c]$ when $a>x$. Thus, $x+c-(a+c)(\frac{x}{a})^{\frac{a}{a+c}}$ is $0$ at $x=a$, decreasing for $0<x<a$, and increasing for $x>a$.  
\end{proof}

\subsubsection*{Theorem \ref{tG}}
\begin{proof}
The result is established via induction. By Definition \ref{defz} $Z_1=R_1=X_1$. Since $\log X_1\sim\mathcal{N}(\mu t_1,\sigma^2t_1)$, it follows that $\log Z_1\sim\mathcal{N}(m_1,v_1)$. 

Now suppose the result holds for all $k'<k$ where $k\in\mathbb{N}\setminus\{1\}$. By \eqref{recursion}, 
\begin{equation}
R_k=\frac{Y_k}{\sum_{j=0}^{k-1}c_j}=\frac{X_k(Y_{k-1}+c_{k-1})}{\sum_{j=0}^{k-1}c_j}=\frac{X_k}{\sum_{j=0}^{k-1}c_j}\Big(R_{k-1}\sum_{j=0}^{k-2}c_j+c_{k-1}\Big).
\label{zpart1}
\end{equation}
By Definition \ref{defz} and log-normality of $Z_{k-1}$, $a_k=\exp\mathbb{E}[\log(Z_{k-1}\sum_{j=0}^{k-2}c_j)]=\exp(m_{k-1})\sum_{j=0}^{k-2}c_j$. By Lemma \ref{llb}, 
\begin{equation}
(a_k+c_{k-1})\Bigg(\frac{R_{k-1}}{a_k}\sum_{j=0}^{k-2}c_j\Bigg)^{\frac{a_k}{a_k+c_{k-1}}}\leq R_{k-1}\sum_{j=0}^{k-2}c_j+c_{k-1},\quad w.p.1.
\label{zpart2}
\end{equation}
Combining \eqref{zpart1}, \eqref{zpart2} and Definition \ref{defz} yields $Z_k\leq R_k$ w.p.1.

Since $X_k$ and $Z_{k-1}$ are independent and log-normal, $Z_k$ must be log-normal. Applying properties of $\log$ to Definition \ref{defz},
\begin{equation}
\log Z_k=\log\frac{a_k+c_{k-1}}{\sum_{j=0}^{k-1}c_j}+\log X_k+b_k\Big(\log Z_{k-1}-\log\frac{a_k}{\sum_{j=0}^{k-2}c_j}\Big).
\label{gzkk}
\end{equation}
From \ref{gzkk}, it is not hard to see that $\log Z_k\sim\mathcal{N}(m_k,v_k)$.
\end{proof}

\subsubsection*{Theorem \ref{t1}}
\begin{proof}
By Theorem \ref{tG}, the result holds when $m_k$ and $v_k$ are as in Theorem \ref{tG}. It remains to be verified is that the expressions of $m_k$ and $v_k$ given in Theorems \ref{tG} and \ref{t1} are equal. This is clearly the case for $k=1$. Suppose the expressions of $m_k$ and $v_k$ given in Theorems \ref{tG} and \ref{t1} are equal for all $k'<k$ where $k\in\mathbb{N}\setminus\{1\}$. Then by Theorem \ref{tG} and the induction assumption,
\begin{equation*}
\begin{split}
m_k&=\log\frac{(k-1)\exp m_{k-1}+1}{k}+\mu\\
&=\log\frac{(k-1)\frac{\exp(\mu(k-1))-1}{(k-1)(\exp\mu-1)}\exp\mu+1}{k}+\mu\\
&=\log\frac{\exp(\mu k)-1}{k(\exp\mu-1)}+\mu.
\end{split}
\end{equation*}
Again by Theorem \ref{tG} and the induction assumption, $v_k=b_k^2v_{k-1}+\sigma^2$, where
\begin{equation*}
b_k=\frac{(k-1)\exp m_{k-1}}{(k-1)\exp m_{k-1}+1}=\frac{\exp(\mu k)-\exp\mu}{\exp(\mu k)-1}.
\end{equation*}
Furthermore, the recursion $v_k=b_k^2v_{k-1}+\sigma^2$ with $v_1=\sigma^2$ implies that $v_k=\sigma^2(1+\sum_{j=0}^{k-2}(\prod_{i=0}^jb_{k-i})^2)$. Observe that for all $i\in\mathbb{N}$,
\begin{equation*}
\frac{\exp(\mu k)-\exp(\mu i)}{\exp(\mu k)-1}\cdot\frac{\exp(\mu (k-i))-\exp\mu}{\exp(\mu (k-i))-1}=\frac{\exp(\mu k)-\exp(\mu(i+1))}{\exp(\mu k)-1}.
\end{equation*}
It follows that 
\begin{equation*}
\prod_{i=0}^jb_{k-i}=\frac{\exp(\mu k)-\exp(\mu (j+1))}{\exp(\mu k)-1}.
\end{equation*}
Thus, for all $k\in\mathbb{N}$,
\begin{equation*}
\begin{split}
\frac{v_k}{\sigma^2}&=1+\sum_{j=1}^{k-1}\Big(\frac{\exp(\mu k)-\exp(\mu j)}{\exp(\mu k)-1}\Big)^2\\
&=1+\frac{\sum_{j=1}^{k-1}\exp(2\mu k)+\exp(2\mu j)-2\exp\{\mu(k+j)\}}{(\exp(\mu k)-1)^2}\\
&=1+\frac{(k-1)\exp(2\mu k)+\frac{\exp(2\mu k)-\exp(2\mu)}{\exp(2\mu)-1}-2\exp(\mu k)\frac{\exp(\mu k)-\exp(\mu)}{\exp\mu-1}}{(\exp(\mu k)-1)^2}\\
&=\frac{-1+\exp(\mu k)[2(1+\exp\mu)+\exp(\mu k)\big(-1-k+(k\exp\mu-2)\exp\mu\big)]}{(\exp(\mu k)-1)^2(\exp(2\mu)-1)}.
\end{split}
\end{equation*}

To see that $m_k$ is increasing/decreasing in $k$ depending on the sign of $\mu$, take the derivative with respect to $k$:
\begin{equation*}
\frac{d}{dk}m_k=\frac{1+\exp(\mu k)(\mu k-1)}{k[\exp(\mu k)-1]}.
\end{equation*}
Since $k[\exp(\mu k)-1]$ has the same sign as $\mu$, it suffices to show that $f(x)=1+\exp x(x-1)$ is positive for all $x\neq0$. Observe that $f(0)=1>0$. So it now suffices to show that $\frac{d}{dx}f(x)=x\exp x$ is negative for $x<0$ and positive for $x>0$. This is obviously the case because $\exp x>0$ whenever $x\neq0$. 

To see that $v_k$ is increasing in $k$, induction is employed. By Theorem \ref{tG}, $v_k=b_k^2v_{k-1}+\sigma^2$, where $b_k=\frac{\exp(\mu k)-\exp\mu}{\exp(\mu k)-1}$. It is not hard to see that $v_2>v_1$. Now suppose $v_k>v_{k-1}$ for some $k\in\mathbb{N}\setminus\{1\}$. Some algebra shows that $b_{k+1}>b_k$. It follows that $b_{k+1}^2v_k>b_k^2v_{k-1}$, and thus $v_{k+1}>v_k$. 
\end{proof}

\subsubsection*{Theorem \ref{tcont}}
\begin{proof}
By Theorem \ref{t1}, for each $n\in\mathbb{N}$, there exists $Z_n$ such that $Z_n\leq \mathcal{R}_n$ w.p.1 and $\log Z_n\sim\mathcal{N}(m_n,v_n)$, using the substitutions $\mu\leftarrow\frac{\mu}{n}$ and $\sigma^2\leftarrow\frac{\sigma^2}{n}$ to evaluate $m_n$ and $v_n$. It follows that $\mathbb{P}(Z_n\leq x)\geq\mathbb{P}(\mathcal{R}_n\leq x)$ for each $x\in\mathbb{R}$ and $n\in\mathbb{N}$. Note that $\lim_{n\to\infty}\mathcal{R}_n$ is integrated geometric Brownian motion w.r.t. time (see \cite{milevsky2003continuous}), so $\lim_{n\to\infty}\mathbb{P}(\mathcal{R}_n\leq x)$ exists. Further, it is not hard to see that $\lim_{n\to\infty}\mathbb{P}(Z_n\leq x)$ exists, as it describes the limit of log-Normal cdfs, whose log-mean and log-variance both converge. In particular, $\lim_{n\to\infty}\mathbb{P}(Z_n\leq x)$ is the cdf of a log-Normal random variable with log-mean $\lim_{n\to\infty}m_n$ and log-variance $\lim_{n\to\infty}v_n$. Note that the limits of $m_n$ and $v_n$ are found using their expressions in Theorem \ref{t1}. More specifically, L'Hopital's rule is used to find $\lim_{n\to\infty}m_n$, and the fact that $\lim_{n\to\infty}n(\exp(\frac{2\mu}{n})-1)=2\mu$ is used to find $\lim_{n\to\infty}v_n$.
\end{proof}

\subsubsection*{Theorem \ref{terror}}
\begin{proof}
By Theorem \ref{tG}, $E_k=R_k-Z_k$ for $k=1,2,...$. Applying linearity of expectation and Theorem \ref{t1},
\begin{equation*}
\mathbb{E}[E_k]=\mathbb{E}[R_k]-\mathbb{E}[Z_k]=\mathbb{E}[R_k]-\exp(m_k+\frac{v_k}{2}).
\end{equation*}
For the log-error,
\begin{equation*}
\mathbb{E}[E_k^{\log}]=\mathbb{E}[\log R_k]-\mathbb{E}[\log Z_k]=\mathbb{E}[\log R_k]-m_k.
\end{equation*}
So it suffices to show that $\mathbb{E}[\log R_k]\leq\log y+\sum_{j=1}^J\frac{\mathbb{E}[(R_k-y)^j]}{(-1)^{j-1}jy^j}$ for $y>0$. First observe that the J-th degree Taylor series expansion of $\log x$ about $y$ is
\begin{equation*}
T_J(x)=\log y+\sum_{j=1}^J\frac{(x-y)^j}{(-1)^{j-1}jy^j}.
\end{equation*}
Fix $J$ odd. Then
\begin{equation*}
\begin{split}
\frac{d}{dx}T_J(x)&=\sum_{j=1}^J\frac{(x-y)^{j-1}}{(-1)^{j-1}y^j}\\
&=\frac{1}{y}\sum_{j=0}^{\frac{J-1}{2}}\Big(\frac{x-y}{y}\Big)^{2j}-\frac{x-y}{y^2}\sum_{j=0}^{\frac{J-3}{2}}\Big(\frac{x-y}{y}\Big)^{2j}\\
&=\frac{1}{y}\cdot\frac{\Big(\frac{x-y}{y}\Big)^{J+1}-1}{\Big(\frac{x-y}{y}\Big)^{2}-1}-\frac{x-y}{y^2}\cdot\frac{\Big(\frac{x-y}{y}\Big)^{J-1}-1}{\Big(\frac{x-y}{y}\Big)^{2}-1}\\
&=\frac{1}{y}\cdot\frac{\Big(\frac{x-y}{y}\Big)^{J+1}-\Big(\frac{x-y}{y}\Big)^{J}+\frac{x-y}{y}-1}{\Big(\frac{x-y}{y}\Big)^{2}-1}\\
&=\frac{1}{y}\cdot\frac{\Big(\frac{x-y}{y}\Big)^{J}+1}{\Big(\frac{x-y}{y}\Big)+1}=\frac{\Big(\frac{x}{y}-1\Big)^{J}+1}{x}.
\end{split}
\end{equation*}
In addition, $\frac{d}{dx}\log x=\frac{1}{x}$. So $\frac{d}{dx}T_J(x)<\frac{d}{dx}\log x$ when $x<y$ and $\frac{d}{dx}T_J(x)>\frac{d}{dx}\log x$ when $x>y$. Since $T_J(y)=\log y $, it follows that $T_J(x)\geq\log x$ for all $x\in(0,\infty)$. Substituting $R_k$ for $x$ gives the result.

For the relative error, observe that $1-x\leq -\log x$ for $x\in(0,\infty)$. Therefore,
\begin{equation*}
\mathbb{E}\Big[\frac{E_k}{R_k}\Big]=\mathbb{E}\Big[1-\frac{Z_k}{R_k}\Big]\leq \mathbb{E}\Big[\log\frac{R_k}{Z_k}\Big]=\mathbb{E}[E_k^{\log}].
\end{equation*}
\end{proof}

\subsubsection*{Theorem \ref{lslim}}
\begin{proof}
Observe that $x_k=k\exp(m_k-\frac{\mu v_k}{\sigma^2})$ and $s_k=\frac{v_k}{\sigma^2}$. From the expression of $v_k$ given in Theorem \ref{t1}, it follows that 
\begin{equation*}
\frac{v_k}{\sigma^2}=k-\frac{2\exp\mu+1}{\exp(2\mu)-1}+\mathcal{O}(\exp(-\mu k)).
\end{equation*} 
Using basic limit properties and the expression of $m_k$ given in Theorem \ref{t1}, it follows that $\lim_{k\to\infty}\frac{v_k}{k\sigma^2}=1$, $\lim_{k\to\infty}\exp(m_k-\frac{\mu v_k}{\sigma^2})=0$ and $\lim_{k\to\infty}x_k$ is as given in the Theorem statement.
\end{proof}

\end{appendices}

%%===========================================================================================%%
%% If you are submitting to one of the Nature Portfolio journals, using the eJP submission   %%
%% system, please include the references within the manuscript file itself. You may do this  %%
%% by copying the reference list from your .bbl file, paste it into the main manuscript .tex %%
%% file, and delete the associated \verb+\bibliography+ commands.                            %%
%%===========================================================================================%%

\bibliography{sn-bibliography}% common bib file

\begin{thebibliography}{21}
\providecommand{\natexlab}[1]{#1}
\providecommand{\url}[1]{{#1}}
\providecommand{\urlprefix}{URL }
\providecommand{\doi}[1]{\url{https://doi.org/#1}}
\providecommand{\eprint}[2][]{\url{#2}}
 \bibcommenthead

\bibitem[{Beaulieu and Xie(2004)}]{beaulieu2004optimal}
Beaulieu NC, Xie Q (2004) An optimal lognormal approximation to lognormal sum
  distributions. IEEE Transactions on vehicular technology 53(2):479--489

\bibitem[{Beaulieu et~al(1994)Beaulieu, Abu-Dayya, and
  McLane}]{beaulieu1994comparison}
Beaulieu NC, Abu-Dayya AA, McLane P (1994) Comparison of methods of computing
  lognormal sum distributions and outages for digital wireless applications.
  In: Proceedings of ICC/SUPERCOMM'94-1994 International Conference on
  Communications, IEEE, pp 1270--1275

\bibitem[{Conine~Jr and Tamarkin(1981)}]{conine1981diversification}
Conine~Jr TE, Tamarkin MJ (1981) On diversification given asymmetry in returns.
  The journal of finance 36(5):1143--1155

\bibitem[{Dhaene et~al(2002{\natexlab{a}})Dhaene, Denuit, Goovaerts, Kaas, and
  Vyncke}]{dhaene2002concept}
Dhaene J, Denuit M, Goovaerts MJ, et~al (2002{\natexlab{a}}) The concept of
  comonotonicity in actuarial science and finance: applications. Insurance:
  Mathematics and Economics 31(2):133--161

\bibitem[{Dhaene et~al(2002{\natexlab{b}})Dhaene, Denuit, Goovaerts, Kaas, and
  Vyncke}]{dhaene2002conceptT}
Dhaene J, Denuit M, Goovaerts MJ, et~al (2002{\natexlab{b}}) The concept of
  comonotonicity in actuarial science and finance: theory. Insurance:
  Mathematics and Economics 31(1):3--33

\bibitem[{Dufresne(2001)}]{dufresne2001integral}
Dufresne D (2001) The integral of geometric brownian motion. Advances in
  Applied Probability pp 223--241

\bibitem[{Eriksson and Fransson(2021)}]{eriksson2021dollar}
Eriksson L, Fransson P (2021) Dollar-cost averaging versus lump-sum-investing:
  Evidence from sweden

\bibitem[{Fenton(1960)}]{fenton1960sum}
Fenton L (1960) The sum of log-normal probability distributions in scatter
  transmission systems. IRE Transactions on communications systems 8(1):57--67

\bibitem[{Fishburn et~al(1970)Fishburn, Fishburn, Corporation, and
  of~America}]{fishburn1970utility}
Fishburn P, Fishburn C, Corporation RA, et~al (1970) Utility Theory for
  Decision Making. Operations Research Society of America. Publications in
  operations research, Wiley

\bibitem[{John et~al(2007)John, Angelov, {\"O}nc{\"u}l, and
  Th{\'e}venin}]{john2007techniques}
John V, Angelov I, {\"O}nc{\"u}l A, et~al (2007) Techniques for the
  reconstruction of a distribution from a finite number of its moments.
  Chemical Engineering Science 62(11):2890--2904

\bibitem[{Kirkby et~al(2020)Kirkby, Mitra, and Nguyen}]{kirkby2020analysis}
Kirkby JL, Mitra S, Nguyen D (2020) An analysis of dollar cost averaging and
  market timing investment strategies. European Journal of Operational Research
  286(3):1168--1186

\bibitem[{Knight and Mandell(1992)}]{knight1992nobody}
Knight JR, Mandell L (1992) Nobody gains from dollar cost averaging analytical,
  numerical and empirical results. Financial Services Review 2(1):51--61

\bibitem[{Leggio and Lien(2003)}]{leggio2003empirical}
Leggio KB, Lien D (2003) An empirical examination of the effectiveness of
  dollar-cost averaging using downside risk performance measures. Journal of
  Economics and Finance 27(2):211--223

\bibitem[{Mehta et~al(2007)Mehta, Wu, Molisch, and
  Zhang}]{mehta2007approximating}
Mehta NB, Wu J, Molisch AF, et~al (2007) Approximating a sum of random
  variables with a lognormal. IEEE Transactions on Wireless Communications
  6(7):2690--2699

\bibitem[{Milevsky and Posner(2003)}]{milevsky2003continuous}
Milevsky MA, Posner SE (2003) A continuous-time reexamination of dollar-cost
  averaging. International Journal of Theoretical and Applied Finance
  6(02):173--194

\bibitem[{Rozeff(1994)}]{rozeff1994lump}
Rozeff MS (1994) Lump-sum investing versus dollar-averaging. Journal of
  Portfolio management 20(2):45

\bibitem[{Schr{\"o}der(2003)}]{schroder2003integral}
Schr{\"o}der M (2003) On the integral of geometric brownian motion. Advances in
  Applied Probability pp 159--183

\bibitem[{Schwartz and Yeh(1982)}]{schwartz1982distribution}
Schwartz SC, Yeh YS (1982) On the distribution function and moments of power
  sums with log-normal components. Bell System Technical Journal
  61(7):1441--1462

\bibitem[{Trainor(2005)}]{trainor2005within}
Trainor WJ (2005) Within-horizon exposure to loss for dollar cost averaging and
  lump sum investing. Financial Services Review-Greenwich 14(4):319

\bibitem[{Williams and Bacon(1993)}]{williams1993lump}
Williams RE, Bacon PW (1993) Lump sum beats dollar-cost averaging. Journal of
  Financial Planning 6(2)

\bibitem[{Wilson and Jones(2002)}]{wilson2002}
Wilson JW, Jones CP (2002) An analysis of the s\&p 500 index and cowles’s
  extensions: Price indexes and stock returns, 1870–1999. The Journal of
  Business 75(3):505--533

\end{thebibliography}
%% if required, the content of .bbl file can be included here once bbl is generated
%%\input sn-article.bbl

%% Default %%
%%\input sn-sample-bib.tex%

\end{document}